\let\csname eqalign\endcsname\relax
\let\csname equation*\endcsname\relax
\let\csname endequation*\endcsname\relax 
\let\csname bs\endcsname\relax
\DeclareMathOperator*{\argmax}{argmax}
\newcommand{\bs}[1]{\bm{#1}} 
\newtheorem{theorem}{Theorem}[section]
\newtheorem{thm-defn}[theorem]{Theorem/Definition}
\newtheorem{prop}[theorem]{Proposition}
\theoremstyle{definition}
\newtheorem{definition}[theorem]{Definition}
\theoremstyle{remark}
\newtheorem{remark}[theorem]{Remark}
\begin{document}

\title[A new measure between sets of probability distributions]{A new measure between sets of probability distributions with applications to erratic financial behavior} 

\author{Nick James$^1$ and Max Menzies$^2$\footnote{Equal contribution}}

\address{$^1$ School of Mathematics and Statistics, University of Melbourne, Victoria 3010, Australia}
\address{$^2$ Beijing Institute of Mathematical Sciences and Applications, 
Tsinghua University, Beijing 101408, China}
\ead{nick.james@unimelb.edu.au}

\vspace{10pt}
\begin{indented}
\item[]December 6 2021
\end{indented}

\begin{abstract}
This paper introduces a new framework to quantify distance between finite sets with uncertainty present, where probability distributions determine the locations of individual elements. Combining this with a Bayesian change point detection algorithm, we produce a new measure of similarity between time series with respect to their structural breaks. First, we demonstrate the algorithm's effectiveness on a collection of piecewise autoregressive processes. Next, we apply this to financial data to study the erratic behavior profiles of 19 countries and 11 sectors over the past 20 years. Our measure provides quantitative evidence that there is greater collective similarity among sectors' erratic behavior profiles than those of countries, which we observe upon individual inspection of these time series. Our measure could be used as a new framework or complementary tool for investors seeking to make asset allocation decisions for financial portfolios.

\end{abstract}


\vspace{2pc} 
\noindent{\it Keywords}: Time series analysis, change point detection, Bayesian analysis, similarity measures, market crises

\submitto{\JSTAT} 


\section{Introduction}
\normalsize

In this paper, we marry two widely researched areas within statistics and the natural sciences: similarity and anomaly detection. Our primary means of detecting similarity that we build upon is the use of distance measures such as metrics; our primary means of anomaly detection is change point detection, incorporating a Bayesian perspective around change points. We introduce a new framework of \emph{sets with uncertainty} (which are sets of probability distributions) and a new family of \emph{semi-metrics} between them. Then, we apply this to measure the similarity between time series' sets of change points, as proxies for their erratic behavior profiles.

Metric spaces appear throughout mathematics. One particular subfield that has seen substantial recent activity is the study of metrics between sets, specifically subsets of an ambient metric space. The most utilized metric in this context is the \emph{Hausdorff metric}, which we introduce and summarize in Section \ref{sec:review}. Distances between sets have proven useful in many applications, including image detection and matching \cite{Gardner2014, Dubuisson1994, Rucklidge1995, Rucklidge1996, Rucklidge1997}, the study of fuzzy sets \cite{Brass2002, Rosenfeld1985, chaudhuri1996, Chaudhuri1999, Boxer1997, Fan1998}, and efficient computational geometry \cite{Rote1991,Li2008,Eiter1997, Atallah1983, Atallah1991, Barton2010, Shonkwiler1989, Huttenlocher1990, Huttenlocher1992, Aspert2002}. The primary challenge in this area is that the Hausdorff and other metrics \cite{Fujita2013} are highly sensitive to outliers \cite{Baddeley1992}, while alternative \emph{semi-metrics} may not satisfy the triangle inequality property of a metric. Relevant definitions and recent advances will be reviewed in Section \ref{sec:review}.

In recent years, there has been substantial interest in the use of distance measures between time series \cite{Moeckel1997}, for example to understand similarity in movement between financial assets \cite{Dose2005, Basalto2007, Basalto2008}. More recent work has prioritized distances between time series based on certain critical points, requiring the use of similarity measures between finite sets. These critical points often carry particular importance, capturing information about the broader behavior of a time series. For example, \cite{james2020covidusa} studied turning points in COVID-19 case time series, which may summarize undulating wave behavior and separate the data into different waves of the disease. Outside time series, metric learning has become a popular topic within the field of computational statistics and machine learning more broadly, where a distance function is optimally tuned for a candidate task. Various applications include computer vision \cite{Hua2007, Snavely2006}, text analysis \cite{Davis2008, Lebanon2006} and program analysis \cite{Ha2007}.

A natural corollary of the use of distance measures between time series (and more broadly, any sort of data) is the detection of anomalies. Anomaly detection is a well-researched problem across many data sets and spaces, incorporating various techniques from statistics and machine learning \cite{Breunig2000,Liu2012,Amer2013,chalapathy2019deep,Pang2021}. Our paper follows a line of recent work that aims to detect entire time series, not just single data points, as anomalous. Several means to define anomalous time series may be used, including geometric measures, principal components, and shapelet transformations \cite{Hyndman2015,Beggel2018,cochrane2020anomaly}. Our primary means of observing anomalous time series is agglomerative hierarchical clustering on our new distance measure.

Change point detection is an important subfield of time series analysis. Developed by Hawkins et al. \cite{Hawkins1977,Hawkins2005}, change point (or structural break) detection algorithms estimate the points in time at which the stochastic properties of a time series, such as its underlying mean or variance, change. Traditionally, most algorithms apply hypothesis testing frameworks \cite{Pettitt1979,Ross2011T,Ross2012,Ross2013,Ross2013physa,Ross2014}, which do not quantify the uncertainty surrounding the change points and typically require the assumption of independent data. In a financial setting, this assumption is inappropriate, as rich patterns in correlation structure have been observed in the literature \cite{Fenn2011, Laloux1999, Mnnix2012, Heckens2020, Wilcox2007, Driessen2003, Ausloos2000}. Bayesian methodologies have been introduced to ameliorate both these issues, providing uncertainty intervals around candidate change points in a setting where time series may exhibit strong autocorrelation \cite{Rosen2009,Rosen2012}.

Recent work combined hypothesis testing change point algorithms with semi-metrics between finite sets to produce new distance measures between time series \cite{James2020_nsm}, and applied this to measure similarity in erratic behavior \cite{James2021_crypto}. However, it did not consider the uncertainty inherently present in change point detection, and we are not aware of any existing work that computes distances between sets, where there are uncertainty intervals associated with individual elements in the set. Thus, we introduce a new framework of \emph{sets with uncertainty} (which are certain sets of probability density functions), where probability distributions determine the uncertainty surrounding various elements' locations, and appropriate semi-metrics between them. Then, we combine this new family of semi-metrics with a Bayesian change point detection algorithm, which records the uncertainty around change points, and apply this to quantify distance between time series' sets of change points with uncertainty. This is presented in Section \ref{sec:MJWtheory}.

We apply our procedure to a financial context in Section \ref{sec:results}. Analyzing both countries and sectors, we reveal similarity and anomalies in the long term dynamics of various indices with respect to change points in their behavior. We also demonstrate that our methodology may fit well with more conventional statistical or time series analysis. In our case, we complement our analysis with a closer examination of some of the most frequently observed change points across countries and sectors, associated to the global financial crisis (GFC) and COVID-19. We discuss insights and implications from both our utilized new and existing methodologies in Section \ref{sec:conclusion}.

\section{Review of existing (semi)-metrics}
\label{sec:review}

In this section, we review some properties of a metric space and the existing (semi)-metrics we draw upon, including the Wasserstein metric between probability distributions, the Hausdorff metric between sets, and recently introduced semi-metrics between finite sets.

\begin{definition}[Metric space]
\label{defn:metric}
Let $X$ be a set and $d: X \times X \to \mathbb{R}_{\geq 0}$ a function. Suppose $d$ satisfies the following properties for all $x,y,z \in X$:
\begin{enumerate}
    \item $d(x,y)=0$ if and only if $x=y$;
    \item $d(x,y) = d(y,x)$;
    \item $d(x,z) \leq d(x,y)+d(y,z)$.
\end{enumerate}
Then we say $d$ is a \emph{metric} and the pair $(X,d)$ is a \emph{metric space} \cite{RudinPMI}.

\noindent Alternatively, suppose $d: X \times X \to \mathbb{R}_{\geq 0}$ satisfies only conditions (i) and (ii). Then we say $d$ is a \emph{semi-metric} on $X$.
\end{definition}
Condition (iii) is known as the \emph{triangle inequality}. A particularly important metric is the \emph{Wasserstein metric}, which is used as a distance between two probability measures.

\begin{definition}[Wasserstein metric]
Let $(X,d)$ be a complete and separable metric space. Suppose $\mu$ and $\nu$ are Borel probability measures on $X$, and $q \geq 1$ \cite{Clement2007}. Let $\Gamma(\mu,\nu)$ be the set of all Borel probability measures on $X \times X$ with marginal distributions $\mu$ and $\nu$ respectively. The \emph{Wasserstein metric} \cite{Clement2007} is defined as follows:
\begin{equation}
    W_{q} (\mu,\nu) = \inf_{\gamma \in \Gamma(\mu,\nu)} \bigg( \int_{X \times X} d^{q} (x,y) d\gamma  \bigg)^{\frac{1}{q}}.
\end{equation}
\end{definition}
The Wasserstein distance is most commonly applied to Euclidean space $X=\mathbb{R}^m$ with $d$ as the standard metric. On $X=\mathbb{R}$, the computation of the Wasserstein metric simplifies considerably. If probability measures $\mu, \nu$ on $\mathbb{R}$ have associated cumulative distribution functions (cdf's) $F,G$ and quantile functions \cite{Gilchrist2000} $F^{-1}, G^{-1}$ respectively, the Wasserstein metric can be computed \cite{DelBarrio} as
\begin{align}
\label{eq:Wassequation}
   \left( \int_{0}^1 |F^{-1} - G^{-1}|^q dx\right)^{\frac{1}{q}}.
\end{align}
In this paper, we shall only compute the Wasserstein metric between discrete probability density functions valued on the real line, which always have associated cdf's and quantile functions.

The other class of (semi-)metrics we will use are those between subsets of a given metric space. We begin with a preliminary definition. Let $S$ be a subset of a metric space $X$, and $x$ an element of $X$. Then the distance from the element to the set is defined as the minimal distance from $x$ to $S$, computed as follows:
\begin{equation}
d(x,S) = \inf_{s \in S} d(x,s). \label{min distance defn}
\end{equation}
When the metric $d$ is the Wasserstein metric $W_q$, we denote this minimal distance $d_W$, suppressing $q$ from the notation.

\begin{definition}[Hausdorff metric]
Let $(X,d)$ be a metric space. Suppose $S$ and $T$ are closed and bounded subsets of $X$. The \emph{Hausdorff metric} \cite{Conci2017} between $S$ and $T$ is defined as follows:
\begin{align}
    d_{H}(S,T) =&  \text{max } \bigg( \sup_{s \in S} d(s,T), \sup_{t \in T} d(t,S) \bigg), \\
    =&  \sup \{ d(s,T), s \in S; d(t,S), t \in T \}. 
\end{align}
\end{definition}
This is the supremum or $L^{\infty}$ norm of all minimal distances from elements $s \in S$ to $T$ and $t\in T$  to $S$. The Hausdorff metric is a metric on the set of all closed bounded subsets of $X$. The Hausdorff metric satisfies the triangle inequality, but is highly sensitive to even a single outlier. In \cite{James2020_nsm}, we introduced the following semi-metric, which replaces the $L^{\infty}$ norm with an $L^p$ average ($p \geq 1$). It can only be defined for finite sets $S,T.$
\begin{definition} 
\label{New MJ distance}
Let $(X,d)$ be a metric space, and $p \geq 1$. Suppose $S$ and $T$ are finite subsets of $X$. The MJ$_p$ distance \cite{James2020_nsm} between $S$ and $T$ is defined as follows:
\begin{equation}
\label{eq:MJ}
    d^p_{MJ}({S},{T}) = \Bigg(\frac{\sum_{t\in T} d(t,S)^p}{2|T|} + \frac{\sum_{{s} \in {S}} d(s,T)^p}{2|S|} \Bigg)^{\frac{1}{p}}.
\end{equation}
\end{definition}
As $p \to \infty$, this family of semi-metrics includes the Hausdorff metric as its limit element. In particular, changing $p$ may provide a trade-off between reduced sensitivity to outliers (better when $p$ is small) and greater satisfaction of the triangle inequality (better when $p$ is large). Several theoretical and experimental properties of these semi-metrics are presented in \cite{James2020_nsm}.

The primary application of the semi-metrics in that work was between sets of time series' change points. Specifically, given a collection of real-valued time series $(X^{(i)}_t), i=1,...,n$, a change point detection algorithm was applied that produced sets of structural breaks $S_i$ for each time series, $i=1,...,n$. Then, the distance between time series was defined as $d^p_{MJ}(S_i,S_j)$. However, this work did not take into account any potential uncertainty in the change points.
 
\section{MJ-Wasserstein framework and methodology}
\label{sec:MJWtheory}
In this section, we introduce our framework of \emph{sets with uncertainty}, define appropriate distances between such objects, and describe the primary application of this manuscript, namely quantifying similarity between sets of change points with uncertainty present.

\begin{definition}[Set with uncertainty] 
Let $f_1,...,f_k$ be probability density functions on $\mathbb{R}$. Suppose their supports are disjoint intervals $I_1,...,I_k$. Then we call the set $\{ f_1,...,f_k\}$ a \emph{set with uncertainty} and $k$ its size.
\end{definition}
That is, a set with uncertainty allows the positions of each element to vary according to specified probability distributions. The disjointness of the intervals is analogous to the requirement that the elements of a regular set be distinct. We will use the notation $\tilde{S}, \tilde{T}$ for sets with uncertainty and $S,T$ for regular sets. In the specific case when the probability densities are just delta functions supposed at single points, that is, $\tilde{S}=\{\delta_{x_1},...,\delta_{x_k}\}$, we can associate a regular set $S=\{x_1,...,x_k\}$, and vice versa.

\begin{definition}[MJ-Wasserstein semi-metric]
Let $\tilde{S}=\{f_1,...,f_k\}$ and $\tilde{T}=\{g_1,...,g_l\}$ be two sets with uncertainty, and $p \geq 1$. The MJ-Wasserstein semi-metric $d^p_{MJW}(\tilde{S},\tilde{T})$ is defined as follows:
\begin{align}
\label{eq:MJW}
    d^p_{MJW}(\tilde{S},\tilde{T}) &= \Bigg(\frac{\sum_{g\in \tilde{T}} d_W(g,\tilde{S})^p}{2|\tilde{T}|} + \frac{\sum_{{f} \in \tilde{S}} d_W(f,\tilde{T})^p}{2|\tilde{S}|} \Bigg)^{\frac{1}{p}} \\ &= \Bigg(\frac{\sum_{j=1}^l d_W(g_j,\tilde{S})^p}{2l} + \frac{\sum_{i=1}^k d_W(f_i,\tilde{T})^p}{2k} \Bigg)^{\frac{1}{p}}.
\end{align}
\end{definition}
That is, for each distribution $f_i$, its closest distribution $g_j$ in $\tilde{T}$ is found according to the Wasserstein metric, and this minimal distance is computed. Essentially, this combines the MJ$_p$ semi-metric on a general metric space with the Wasserstein metric on probability distributions.

\begin{remark}
In this remark, we briefly explain why the MJ-Wasserstein semi-metric can be viewed as a direct generalization of the MJ$_p$ defined in (\ref{eq:MJ}). Indeed, let $\tilde{S},\tilde{T}$ be two sets with uncertainty consisting only of Dirac delta functions. To these we can associate regular sets $S_0,T_0$ as discussed above. By (\ref{eq:MJ}), (\ref{eq:MJW}) and the fact that $W_q(\delta_a, \delta_b)=|a-b|$, it follows that $d^p_{MJW}(\tilde{S},\tilde{T})=d^p_{MJ}({S_0},{T_0})$. That is, the MJ-Wasserstein distance between $\tilde{S},\tilde{T}$ reduces to the existing MJ$_p$ distance between $S_0,T_0$.

\end{remark}

\begin{remark}
This property is by no means the only reason we select the Wasserstein metric for our methodology. The well-known Kullback-Leibler divergence \cite{Kullback1951} is unsuitable in our context because it returns a value of infinity when two probability distributions have disjoint support \cite{Cover2006}, as will frequently be the case (where time series' change points are not located close to each other). Similarly, the Jensen-Shannon metric \cite{Endres2003} always returns its greatest possible value of 1 (depending on normalization convention \cite{Lin1991}) between two densities of disjoint support, so is usually uninformative in our application.

On the other hand, the Wasserstein metric is informative regarding where the change points occur, taking uncertainty into account. This is crucial in our application, as we want to keep track of the positions in time when time series behaviors change, and which time series change at similar points in time. Finally, we select the Wasserstein over the related Radon metric, as the latter lacks desirable analytical properties, such as sequential compactness \cite{Daletskii1985}.

\end{remark}

\begin{prop}
\label{prop:triangle inequality again}
For $p >0$, the MJ-Wasserstein distance measures are semi-metrics. However, they fail the triangle inequality up to any constant. That is, there is no constant $k$ such that
\begin{align} \label{TriIn}
    d^p_{MJW}(\tilde{S},\tilde{R}) \leq k(d^p_{MJ}(\tilde{S},\tilde{T})+d^p_{MJ}(\tilde{T},\tilde{R}))
\end{align}
for any sets with uncertainty $\tilde{S},\tilde{T},\tilde{R}$.
\end{prop}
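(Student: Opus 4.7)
The plan is to verify the two semi-metric axioms and then, at fixed $p>0$, exhibit a one-parameter family of triples $(\tilde{S},\tilde{T}_N,\tilde{R}_N)$ whose triangle-inequality ratio diverges as $N\to\infty$, so that for any proposed constant $k$ the inequality~(\ref{TriIn}) fails once $N$ is chosen sufficiently large in terms of $k$ and $p$.

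The semi-metric axioms are routine. Non-negativity is clear, and symmetry is immediate from interchanging the two summands in~(\ref{eq:MJW}) under $\tilde{S}\leftrightarrow\tilde{T}$. For the identity of indiscernibles, $d^p_{MJW}(\tilde{S},\tilde{T})=0$ forces both non-negative summands to vanish, so $d_W(f,\tilde{T})=0$ for each $f\in\tilde{S}$ and $d_W(g,\tilde{S})=0$ for each $g\in\tilde{T}$; since the Wasserstein distance is a genuine metric on probability measures, each $f\in\tilde{S}$ must coincide with some $g\in\tilde{T}$ and vice versa, whence $\tilde{S}=\tilde{T}$.

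For the triangle-inequality failure I would work entirely with Dirac deltas, so that the preceding reduction lets me compute $d^p_{MJW}$ as $d^p_{MJ}$ on the underlying regular sets. The structural idea is to exploit the mismatch between averaging denominators: make $\tilde{T}$ contain all of both $\tilde{S}$ and $\tilde{R}$ with $|\tilde{T}|$ large, so that outlier contributions to $d^p_{MJW}(\tilde{S},\tilde{T})$ and $d^p_{MJW}(\tilde{T},\tilde{R})$ are diluted by $2|\tilde{T}|$, while the single unmatched outlier distinguishing $\tilde{S}$ from $\tilde{R}$ is divided only by $2|\tilde{S}|=4$ and survives. Concretely, fix $L>1$ and, for each positive integer $N$, set
\[
\tilde{S}=\{\delta_0,\delta_L\},\qquad \tilde{R}_N=\{\delta_{i/N^2}:i=0,1,\dots,N\},\qquad \tilde{T}_N=\tilde{R}_N\cup\{\delta_L\}.
\]
Using $\sum_{i=1}^N i^p\sim N^{p+1}/(p+1)$, the direct computations I would carry out give $d^p_{MJW}(\tilde{S},\tilde{T}_N)=O(N^{-1})$ (the only nonzero contributions are the $N$ terms $(i/N^2)^p$ from $\delta_{i/N^2}\in\tilde{T}_N$ to $\delta_0\in\tilde{S}$, summed and divided by $2|\tilde{T}_N|\sim 2N$), $d^p_{MJW}(\tilde{T}_N,\tilde{R}_N)=O(N^{-1/p})$ (only $\delta_L\in\tilde{T}_N$ is unmatched in $\tilde{R}_N$, contributing $\approx L^p$ divided by $2|\tilde{T}_N|$), and $d^p_{MJW}(\tilde{S},\tilde{R}_N)\to L/4^{1/p}$ (the unmatched $\delta_L\in\tilde{S}$ contributes $\approx L^p$ divided only by $2|\tilde{S}|=4$, while the reverse contribution from $\tilde{R}_N$ is $O(N^{-p})$ and vanishes). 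Consequently $d^p_{MJW}(\tilde{S},\tilde{R}_N)$ remains bounded below by the positive constant $L/4^{1/p}$ while $d^p_{MJW}(\tilde{S},\tilde{T}_N)+d^p_{MJW}(\tilde{T}_N,\tilde{R}_N)\to 0$, so choosing $N$ large enough makes the ratio on the left-hand side of~(\ref{TriIn}) exceed any prescribed $k$.

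The main obstacle is calibrating the asymptotics at fixed $p$: the cluster $\{\delta_{i/N^2}\}$ must contract like $1/N$ rather than occupy a fixed interval, and the outlier $\delta_L$ must be diluted by the large denominator $2(N+2)$ rather than the small one $4$, so that both ``small'' distances genuinely tend to zero as $N\to\infty$---otherwise the construction would only produce the kind of bounded-constant failure that is insufficient for the proposition as stated.
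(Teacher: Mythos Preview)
Your proof is correct and follows the same strategy as the paper: verify symmetry and positive-definiteness directly from the definition and the metric property of $W_q$, then reduce the triangle-inequality failure to the case of Dirac deltas (regular sets) via the identification $d^p_{MJW}=d^p_{MJ}$ there. The only difference is that the paper outsources the actual counterexample by citing Proposition~3.6 of \cite{James2020_nsm}, whereas you construct one explicitly; your family $(\tilde{S},\tilde{T}_N,\tilde{R}_N)$ exploits exactly the denominator-mismatch mechanism underlying that cited result, and your asymptotic computations are accurate, so your argument is a self-contained substitute for the citation.
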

\begin{proof}
First, (\ref{eq:MJW}) is defined symmetrically, so we clearly see that $d^p_{MJW}(\tilde{S},\tilde{T})=d^p_{MJW}(\tilde{T},\tilde{S})$. Next, if $d^p_{MJW}(\tilde{S},\tilde{T})=0$, this forces every term $d_W(g, \tilde{S})=0$ for each $g \in \tilde{T}$. As the Wasserstein distance is a metric, this implies $g \in \tilde{S}$. Reversing the reasoning, we also see that each probability density function in $\tilde{S}$ must lie in $\tilde{T}$. So the sets with uncertainty $\tilde{S}$ and $\tilde{T}$ must be equal, with the exact same probability distributions as members. This establishes properties (i) and (ii) of Definition \ref{defn:metric}.

Finally, we demonstrate the failure of the triangle inequality up to any constant. We can find an appropriate counterexample when $\tilde{S},\tilde{T},\tilde{R}$ are regular sets $S,T,R$. We modify the example of failure of the triangle inequality for the MJ semi-metric, presented in \cite{James2020_nsm}, Proposition 3.6. Specifically, suppose $a,b$ are two elements in an an ambient metric space $X$ with $d(a,b)=d$. Let $S=\{a,b\}$ and $R=\{b \}$. Next, suppose $b_1,\ldots,b_n$ are all within a distance $\epsilon$ of $b$. Let $T=\{a,b,b_1,\ldots,b_{n-2}\}$. Essentially, $T$ has $n$ elements, with $n-1$ of them closely bunched together. Then

\begin{eqnarray*}
d^p_{MJW}(\tilde{S},\tilde{T})=d^p_{MJ}(S,T) & \leq & \Big( \frac{1}{2n} (n-2)\epsilon^p \Big)^{\frac{1}{p}} \leq \epsilon; \\
d^p_{MJW}(\tilde{T},\tilde{R})=d^p_{MJ}(T,R)& \leq & \left[ \frac{1}{2n}\Big( d^p+(n-2)\epsilon^p \Big) \right]^{\frac{1}{p}} \leq \Big(\frac{d^p}{2n} + \epsilon^p \Big)^{\frac{1}{p}};\\
d^p_{MJW}(\tilde{S},\tilde{R})=d^p_{MJ}(S,R)&=&\Big( \frac{1}{4}d^p \Big)^{\frac{1}{p}} = 4^{\frac{-1}{p}}d.
\end{eqnarray*}
Suppose $\epsilon$ is chosen such that $\epsilon<d(2n)^{\frac{-1}{p}}<dn^{\frac{-1}{p}}$. Then $d^p_{MJW}(\tilde{S},\tilde{T})+d^p_{MJW}(\tilde{T},\tilde{R}) \leq \frac{2d}{n^{\frac{1}{p}}}$

\noindent Carefully noting what is above, $d^p_{MJW}(\tilde{S},\tilde{T})+d^p_{MJW}(\tilde{T},\tilde{R})= O(d n^{\frac{-1}{p}})$ while $d^p_{MJW}(\tilde{S},\tilde{R}) =\Theta(d)$. Choosing $n$ sufficiently large, with $\epsilon<d(2n)^{\frac{-1}{p}}$, we deduce there is no universal modified triangle inequality for the MJ-Wasserstein distance.

\end{proof}

\begin{remark}
The primary cause for failure of the triangle inequality, where $d^p_{MJW}(\tilde{S},\tilde{T})+d^p_{MJW}(\tilde{T},\tilde{R})$ may be large compared to $d^p_{MJW}(\tilde{S},\tilde{R})$, is a large size of $\tilde{T}$. To illustrate this, let the \emph{asymmetric distance} $d^a_\to(\tilde{S},\tilde{T})$ be defined as follows:
\begin{align}
d^a_\to(\tilde{S},\tilde{T})=\frac{\sum_{f\in \tilde{S}} d_W(f,\tilde{T})^p}{2|\tilde{S}|}.
\end{align}
This measures a one-way distance from $\tilde{S}$ to $\tilde{T}$. Then $d^p_{MJW}(\tilde{S},\tilde{T})=\left(d^a_\to(\tilde{S},\tilde{T})+ d^a_\to(\tilde{T},\tilde{S})\right)^\frac{1}{p}$. Suppose that $\tilde{T}$ contains $\tilde{S}$ and $\tilde{R}$ as subsets. Then $d^a_\to(\tilde{S},\tilde{T})= d^a_\to(\tilde{R},\tilde{T})=0$. Thus
\begin{align}
\label{eq:oneside}
    d^p_{MJW}(\tilde{S},\tilde{T})+d^p_{MJW}(\tilde{T},\tilde{R})= d^a_\to(\tilde{T},\tilde{S})^\frac{1}{p} + d^a_\to(\tilde{T},\tilde{R})^\frac{1}{p}
\end{align}
Then, increasing the size of $\tilde{T}$ with bunched elements may reduce (\ref{eq:oneside}) relative to  $d^p_{MJW}(\tilde{S},\tilde{R})$. That is, the triangle inequality may be violated when $\tilde{T}$ is excessively ``large'' relative to $\tilde{S}$ and $\tilde{R}$, for example if $\tilde{T}$ contains $\tilde{S}$ and $\tilde{R}$ as subsets as well as bunched or irrelevant elements.

\end{remark}

\begin{remark}
For this reason, we require sets with uncertainty, by definition, to have disjoint intervals of support. Not only is this always the case in the output of our change point algorithm, it also helps to reduce the effects of ``bunching,'' which may cause a failure of the triangle inequality. The choice of the Wasserstein metric is useful here too: disjoint supports of the constituent probability distributions in $\tilde{S}=\{f_1,...,f_k\}$ prevent bunching in the values of $d_W(f_i,f_j)$; this was the cause of the violation of the triangle inequality in the proof of Proposition \ref{prop:triangle inequality again}. Even more simply, the disjointness condition prohibits duplication (or near-duplication) of elements of $\tilde{S}$. (Near-)duplication of elements may also throw off the triangle inequality.

\end{remark}

\begin{remark}
We briefly compare our framework sets with uncertainty to the existing notion of fuzzy sets. A fuzzy set is a pair $A=(X,m)$ where $X$ is a reference set and $m: X \to [0,1]$ is a membership function. The membership function describes whether elements $x \in X$ are not included in the fuzzy set $(m(x)=0)$, partially included $(0<m(x)<1)$ or fully included $(m(x)=1)$. Fuzzy sets are frequently used in image detection and pattern recognition \cite{Brass2002, Rosenfeld1985, chaudhuri1996, Chaudhuri1999, Boxer1997, Fan1998}. Fuzzy sets are typically not considered probabilistic in nature; there is no requirement for values of $m(x)$ to add to 1 over particular intervals, and the membership function does not model uncertainty in the location of an element, only degree of membership.

In addition, existing distance measures between fuzzy sets are based on measures between sets, not probability density functions. Specifically, (semi)-metrics between fuzzy sets use the Hausdorff distance \cite{Rosenfeld1985}, typically between sets $A^{\geq r} = \{x \in X: m(x) \geq r\}$. That is, our semi-metrics between sets with uncertainty are of a different nature, prioritizing the probability distributions under which the elements vary, rather than using the Hausdorff metric between specified subsets of the reference set $X$.

There is a method to associate a fuzzy set to a set with uncertainty. If $\tilde{S}=\{f_1,...,f_k\}$ where probability distributions have supports $I_1,...,I_k$, we may set $X=I_1 \cup ... \cup I_k$ and $m(x)=f_j(x)$ if $x \in I_j$, zero otherwise. As the supports are assumed to be disjoint, this defines a fuzzy set. However, no existing measures between fuzzy sets produce the MJ-Wasserstein distance measure between sets with uncertainty. Thus, the sets with uncertainty framework is necessary to construct the new distance measures and the existing fuzzy sets category is not sufficient.

\end{remark}

\subsection{Application to time series}
\label{sec:applicationtodistances}
Now, we describe our application to time series' sets of change points with uncertainty. Let $(X^{(i)}_t), i=1,...,n$ be a collection of $n$ time series over a period $t=1,...,T$. Seeking to determine similarity between time series' erratic behavior profiles, taking into account uncertainty, we apply the Bayesian change point detection algorithm of Rosen et al. \cite{Rosen2012,Rosen2017} to each time series. Initially, this produces a distribution over the number of change points $m$, and conditional on $m$, a set of $m$ points with uncertainty intervals. We select the maximally likely number of change points $m^{(i)}_0$ for each individual time series, resulting in a set with uncertainty $\tilde{S}_i$ with $m^{(i)}_0$ elements. Finally, we define our distance between the $n$ time series as $D_{ij}=\frac{1}{T}d^1_{MJW}(\tilde{S}_i,\tilde{S}_j)$, setting $p=1$ in the MJ-Wasserstein parameter and $q=1$ in the Wasserstein metric. We remark that this is an abuse of notation, and that the distance can only be defined after a chosen change point detection algorithm has been performed. Nonetheless, the application of this entire methodology provides a useful framework for quantifying affinity between an attribute of time series behavior that is notoriously hard to capture, while considering uncertainty. We normalize by the length of the time series $T$ so we can compare features of collections of time series over different period lengths, which will be required in Section \ref{sec:results}. We provide full details of the change point algorithm in \ref{appendix:RJMCMCsampling}.

\subsection{Empirical analysis of the triangle inequality}
\label{sec:transitivity analysis}
As established in Proposition \ref{prop:triangle inequality again}, the MJ-Wasserstein semi-metric does not satisfy the triangle inequality (condition (iii) of Definition \ref{defn:metric}), even up to a constant. This could present a practical problem, as it could mean the property of sets with uncertainty being close under the semi-metric may not be transitive. That is, perhaps $d^p_{MJW}(\tilde{S}, \tilde{T})$ and $d^p_{MJW}(\tilde{T}, \tilde{R})$ could be sufficiently small, indicating substantial similarity between $\tilde{S}, \tilde{T}$ and $\tilde{T}, \tilde{R}$, but $d^p_{MJW}(\tilde{S}, \tilde{R})$ might not be small. Thus, we include an empirical analysis of the failure of the triangle inequality property with our use of the new semi-metric. Specifically, we examine two questions:
\begin{enumerate}
    \item how often does the $d^p_{MJW}$ semi-metric  fail the triangle inequality, and
    \item how badly do these quantities violate the triangle inequality?
\end{enumerate}
To explore this empirically, we generate a three-dimensional matrix and test whether the triangle inequality is satisfied for all possible triples of elements within the matrix. The matrix is defined as
\begin{equation}
\label{eq:transitivityanalysis}
    T_{i,j,k}=
    \begin{cases}
      \text{blue }, & \frac{D_{ik}}{D_{ij} + D_{jk}} \leq 1, \\
      \text{yellow },  & 1 < \frac{D_{ik}}{D_{ij} + D_{jk}} \leq 2, \\
      \text{red}, & \text{else,}
    \end{cases}
\end{equation}
where $D_{ij}$ give the distances between a collection of sets with uncertainty, as defined in Section \ref{sec:applicationtodistances}. We then record the proportion of triples $(i,j,k)$ that fail the triangle inequality as well as the mean value of $\frac{D_{ik}}{D_{ij} + D_{jk}}$ among the failed triples.

\subsection{Running cost of procedure}
\label{sec:runningcost}

In this section, we calculate the cost of our entire procedure, both the change point detection algorithm in \ref{appendix:RJMCMCsampling} and our subsequent calculation of the MJ-Wasserstein distances. As in Section \ref{sec:applicationtodistances}, suppose we have $n$ time series $(X^{(i)}_t)$ over a period $t=1,...,T$. Further, suppose as a result of implementing the change point algorithm, every considered number of change points $m$ is bounded above by a constant $M$. Finally, suppose that every constituent probability distribution within each $\tilde{S}_i$ is supported on at most $P$ elements. Alternatively, one may suppose the sizes of the support intervals $I^{(i)}_j$ are uniformly at most $P$.

First, the Bayesian change point detection algorithm proceeds via a reversible jump Markov chain Monte Carlo algorithm. We use a fixed number $K=10 000$ of iterations (including 5000 for burn in). The two most costly procedures within a single iteration are as follows. One is the search for optimal points $t$ within so-far-determined segments $[\xi_{k^{*}-1}, \xi_{k^{*}+1}]$. This has a cost of $O(T)$. The other is the inversion of a matrix of size at most $M \times M$. We reiterate that we assume our bound $M$ is greater than every possible considered number of change points $m$ within the sampling scheme of \ref{appendix:RJMCMCsampling}. So the inversion has a cost of at most $O(M^3)$ (but in practice is more efficient). Thus, the cost of the change point algorithm for a single time series is $O(K(M^3+T)$. It follows the total cost of implementing the change point algorithm across a collection of $n$ time series is $O(nKM^3+nKT))$. At this point, $n$ sets with uncertainty $\tilde{S}_i$, $i=1,...,n$ have been obtained.

Next, we consider the cost of computing a single MJ-Wasserstein distance $d^1_{MJW}(\tilde{S}_i,\tilde{S}_j)$. This computation involves at most $O(|S_i||S_j|)$ comparisons between probability densities, as seen in (\ref{eq:MJW}). By (\ref{eq:Wassequation}), computing each individual Wasserstein distance $d_W$ between probability distributions supported on intervals of length at most $P$ is of cost $O(P)$. Indeed, (\ref{eq:Wassequation}) calculates the difference between two step functions with at most $P$ steps. Hence, the computation of $d^1_{MJW}(\tilde{S}_i,\tilde{S}_j)$ is of cost at most $O(M^2P)$. Thus, computing our full matrix $D_{ij}$ between all pairs of time series is of cost $O(n^2M^2P)$.

Finally, with these matrix entries computed, the cost of the procedure to empirically analyze the triangle inequality is $O(n^3)$. Overall, the full cost of our procedure is $O(nKM^3+nKT+ n^2M^2P + n^3)$. In practice, $K$ and $T$ are much larger than any other parameter, so the total cost is $O(nKT)$, with the greatest cost occurring as a result of the sampling scheme.

\subsection{Validation of synthetic data}
\label{sec:validation}
In this section, we validate our methodology on six synthetic time series of length $T=1500$. Each is formed by concatenating a sequence of autoregressive processes whose parameters change at specified locations. Four of the generated time series are chosen to share similar locations of structural breaks, while two are quite different from the rest, exhibiting similarity only with each other. This will serve as a simplified representative of the real data in Section \ref{sec:results}, where there will frequently be a majority collection of similar time series with a smaller number of outlier sectors or countries.

The first four time series are chosen to have change points at approximately $t=200, 500, 700, 900, 1100, 1300$ while the latter two have change points at approximately $t=750$. We remark that the detection of change points in this context has traditionally been very difficult due to limited or no changes in the process mean or variance between autoregressive processes and the high amount of autocorrelation. Nonetheless, our methodology performs well at identifying change points in these synthetic time series, and our distance measure clearly identifies the similarity between the first four time series and the outlier status of final two. In Figure \ref{fig:ARdend}, we display hierarchical clustering on the obtained $6 \times 6$ matrix $D_{ij}$, which identifies the similarity of the first four time series and the outlier status of the final two. Here and elsewhere in this paper, we implement agglomerative hierarchical clustering via the average-linkage method \cite{Mllner2013}. In \ref{app:autoregressive}, we provide full details of these synthetic time series, including all piecewise autoregressive components, and show the change points detected.

Finally, we include a brief sensitivity analysis of the ability of our change point algorithm to detect small changes in adjacent autoregressive processes. Results are promising, indicating successful detection except for very small changes in the autoregressive parameters. This experiment is also detailed in \ref{app:autoregressive}.

\begin{figure}
    \centering
    \includegraphics[width=\textwidth]{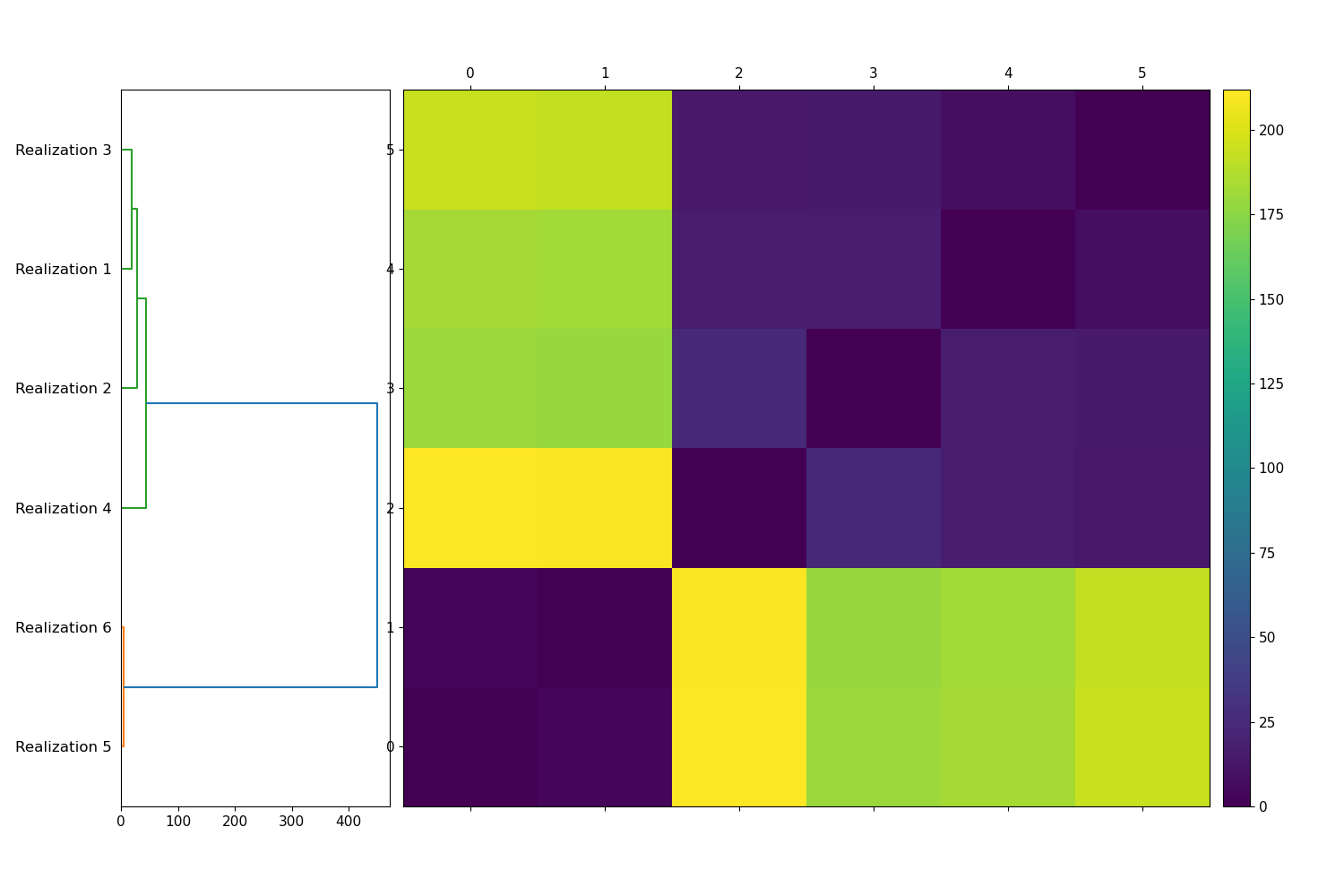}
    \caption{Hierarchical clustering applied to six synthetic time series obtained from piecewise autoregressive processes. Our semi-metric clearly identifies the strong similarity between the first four, and the outlier status of the final two.}
    \label{fig:ARdend}
\end{figure}

\section{Erratic behavior analysis of time series}
\label{sec:results}

In this section, we study the erratic behavior profiles of 19 country and 11 sector indices. Country data runs from 01/01/2002-10/10/2020, while sector data runs from 01/01/2000-10/10/2020. First, we generate a log return time series for each country and sector index. Next, we apply a Bayesian change point detection algorithm, as detailed in \ref{appendix:RJMCMCsampling}, to generate a set of change points with uncertainty for each index. Then, we may apply our semi-metrics between such sets with uncertainty, as defined in Section \ref{sec:MJWtheory}.

\subsection{Country similarity}
First, we apply the methodology of Section \ref{sec:MJWtheory} to the log return time series of 19 country indices. We produce a $19 \times 19$ distance matrix $D^c_{ij}$ that measures similarity between the erratic behavior profiles of these time series, as summarized by their sets of change points with uncertainties. We display hierarchical clustering of $D^c$ in Figure \ref{fig:Country_dendrogram}. The cluster structure of these indices reveals one predominant cluster of countries, and a minority collection of various outliers. The primary cluster consists of Australia, Canada, France, Germany, Italy, Korea, the Netherlands, Saudi Arabia, Switzerland, Turkey, the United Kingdom (UK) and the United States (US). The outlier countries include Brazil, China, India, Indonesia, Japan, Russia and Spain. The primary cluster contains a strong subcluster of mostly European composition, such as France, Italy, the Netherlands, Switzerland and the UK. This may be due to substantial similarity in the geographic location and political association between such countries, leading to similar change point propagation surrounding events such as Brexit. Notably, Spain is determined to be markedly less similar than its European counterparts. In this section and elsewhere, we refer to a time series (in this case pertaining to countries) as anomalous if it lies in its own cluster as determined by our implementation of agglomerative hierarchical clustering via the average-linkage method \cite{Mllner2013}. In this instance, the unique anomalous country within the collection is China. This is unsurprising, as many of the idiosyncratic restrictions related to their local equity market may provide different dynamics to those exhibited by other countries.

We take a closer look at the similarity of select country indices in Figure \ref{fig:CountryTimeSeries}. First, we observe clear similarity in the log return time series for France, Germany, and the UK, in Figures \ref{fig:France_index}, \ref{fig:Germany_index} and \ref{fig:UK_index}, respectively. All three of these time series exhibit change points that are, broadly speaking, evenly distributed over the entire period of analysis. Further, all three experience a change point in 2008-2009, associated to the global financial crisis (GFC), in mid-2016, associated with Brexit, and early 2020, associated with COVID-19. These three examples are also representative of the behavior of Australia, Canada, Italy, the Netherlands and Switzerland, all of which lie in the same subcluster of similarity. Next, we display the time series for the US (\ref{fig:US_index}); this is characterized by a long period of stability from 2012-2019, associated with a prolonged bull market. The strong similarity between France, Germany and the UK and the slightly less affinity between them and the US are all visible in Figure \ref{fig:Country_dendrogram}. In Figures \ref{fig:India_index} and \ref{fig:Brazil_index}, we display the time series for India and Brazil, respectively. These are both characterized by a period of stability from 2010-2016; the primary difference is that India exhibits more change points in the first decade, while Brazil is more stable during this time. Russia is displayed in Figure \ref{fig:Russia_index}, distinguished by a lengthy period of stability in the first decade followed by regularly spaced change points. Finally, China (\ref{fig:China_index}) is distinguished as the only country that lacks a change point in 2020 associated with COVID-19. Indeed, the Chinese market recovered more quickly from the impact of the pandemic than any other country \cite{Chinaguardian_2020}.

\begin{figure}
    \centering
    \includegraphics[width=0.95\textwidth]{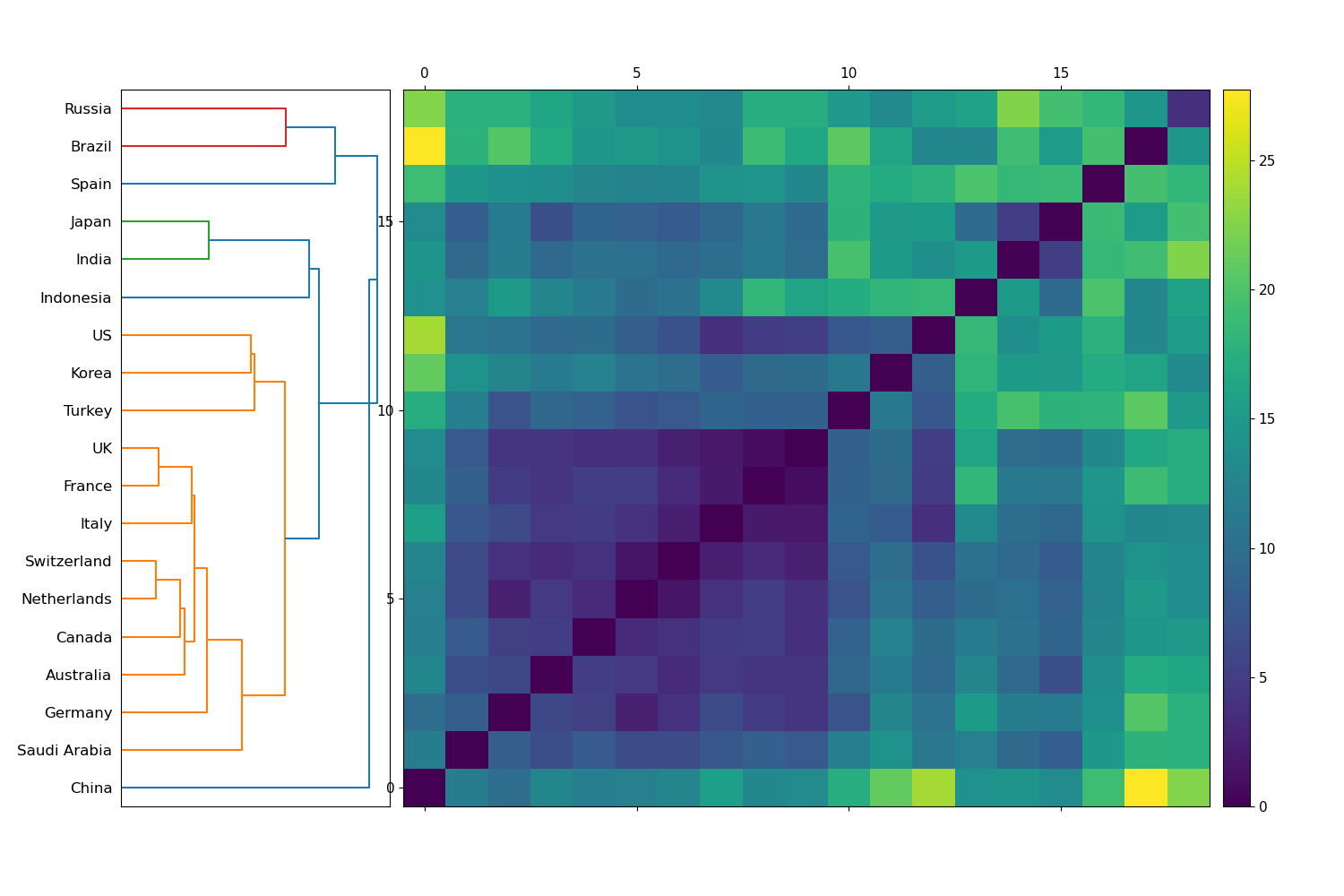} 
    \caption{Hierarchical clustering on $D^c$, the distance matrix between country indices' sets of change points with uncertainty. A strong subcluster of similarity consists of mostly European countries. Developing countries exhibit generally different erratic behavior profiles versus developed countries.}
    \label{fig:Country_dendrogram}
\end{figure}

\begin{figure}
    \centering
    \begin{subfigure}[b]{0.38\textwidth}
        \includegraphics[width=\textwidth]{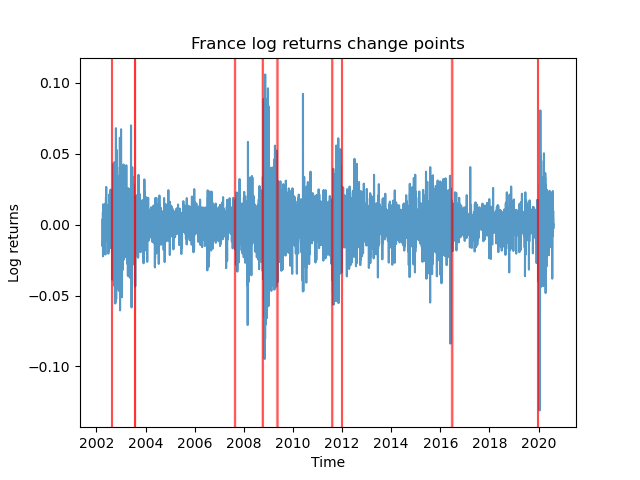}
        \caption{}
        \label{fig:France_index}
    \end{subfigure}
    \begin{subfigure}[b]{0.38\textwidth}
        \includegraphics[width=\textwidth]{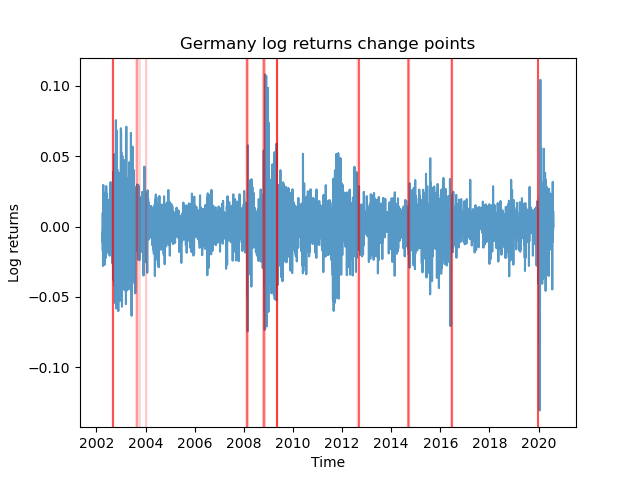}
        \caption{}
        \label{fig:Germany_index}
    \end{subfigure}
    \begin{subfigure}[b]{0.38\textwidth}
        \includegraphics[width=\textwidth]{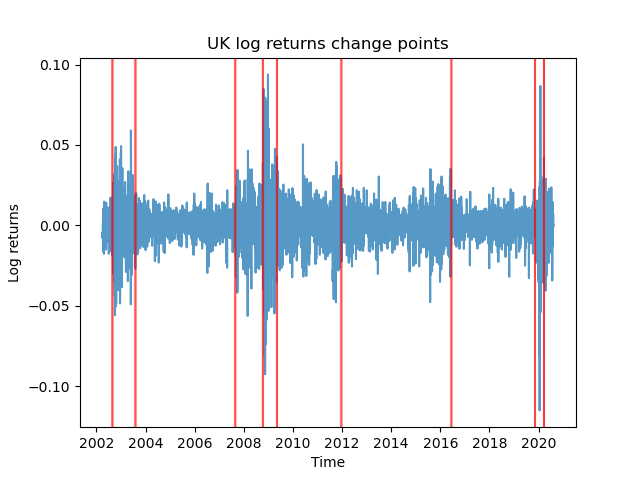}
        \caption{}
        \label{fig:UK_index}
    \end{subfigure}
    \begin{subfigure}[b]{0.38\textwidth}
        \includegraphics[width=\textwidth]{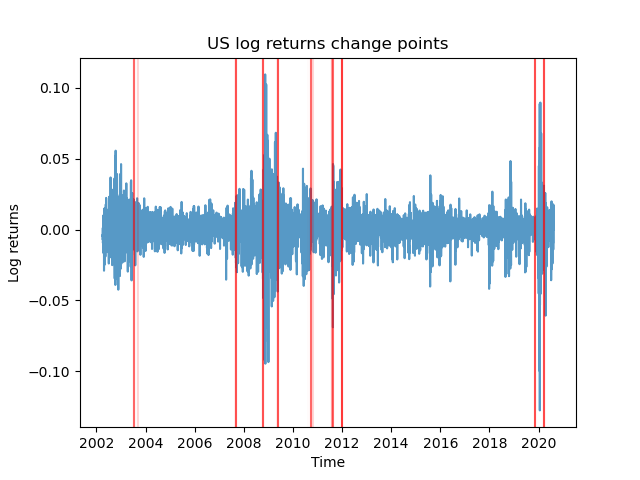}
        \caption{}
        \label{fig:US_index}
    \end{subfigure}
    \begin{subfigure}[b]{0.38\textwidth}
        \includegraphics[width=\textwidth]{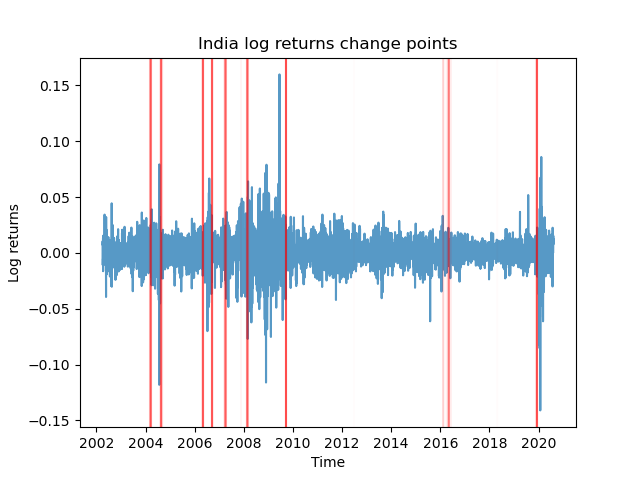}
        \caption{}
        \label{fig:India_index}
    \end{subfigure}
    \begin{subfigure}[b]{0.38\textwidth}
        \includegraphics[width=\textwidth]{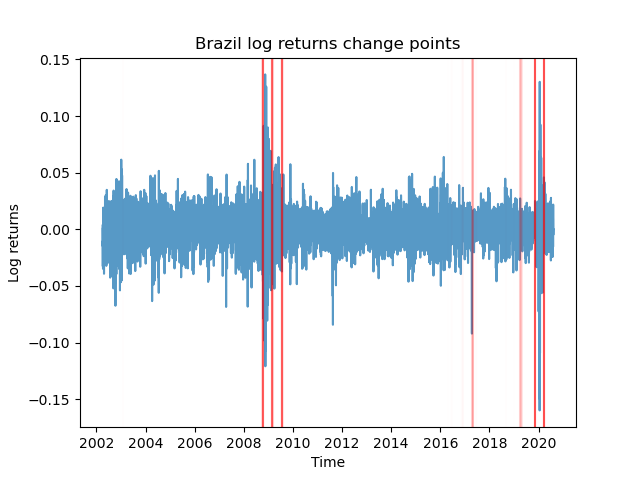}
        \caption{}
        \label{fig:Brazil_index}
    \end{subfigure}
    \begin{subfigure}[b]{0.38\textwidth}
        \includegraphics[width=\textwidth]{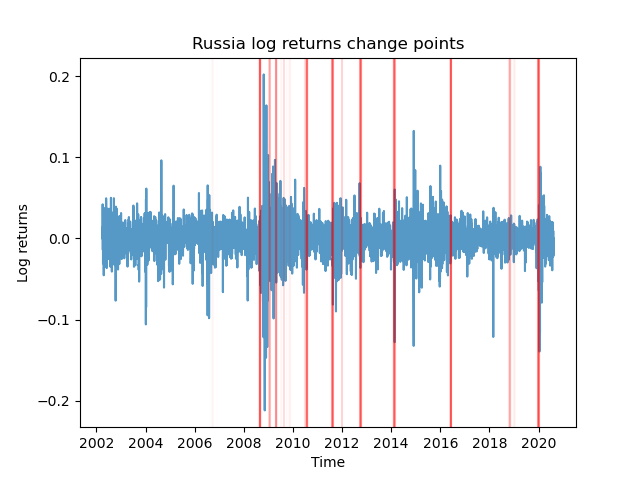}
        \caption{}
        \label{fig:Russia_index}
    \end{subfigure}
    \begin{subfigure}[b]{0.38\textwidth}
        \includegraphics[width=\textwidth]{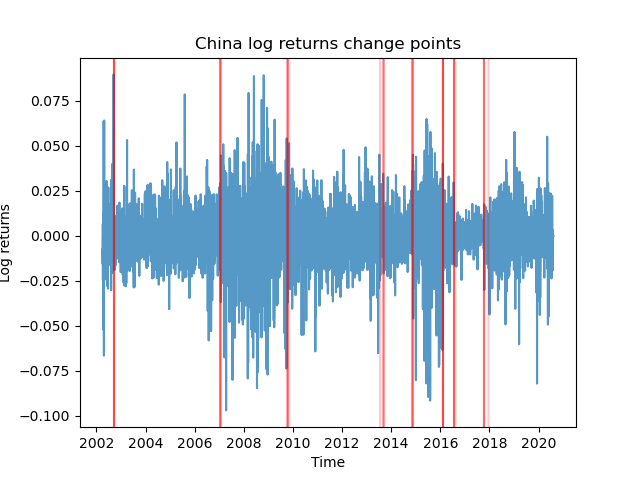}
        \caption{}
        \label{fig:China_index}
    \end{subfigure}
    \caption{Log return time series with detected change points for (a) France, (b) Germany, (c) the UK, (d) the US, (e) India, (f) Brazil, (g) Russia and (h) China. The transparency of the change point represents the value of the probability density function across its support interval. High similarity is observed between the erratic behavior profiles of France, Germany and the UK, with all three experiencing change points around the GFC, Brexit and COVID-19.}
    \label{fig:CountryTimeSeries}
\end{figure}

\subsection{Sector similarity}
Next, we apply the methodology of Section \ref{sec:MJWtheory} to the log return time series of 11 sector indices to produce an $11 \times 11$ distance matrix $D^s_{ij}$, and display hierarchical clustering of this matrix in Figure \ref{fig:Sector_dendrogram}. This dendrogram consists of a dominant cluster and two outliers. The primary cluster consists of communications, consumer discretionary, energy, financials, healthcare, information technology (IT), industrials, real estate and utilities. Of these,  consumer discretionary, energy, financials, healthcare, industrials and real estate comprise a subcluster of similarity. The two outlier sectors are consumer staples and materials. 

We examine select sectors and their change points in Figure \ref{fig:SectorTimeSeries}. In Figures \ref{fig:CDisc_index}, \ref{fig:Energy_index} and \ref{fig:Financials_index}, we display the log return time series for the consumer discretionary, energy and financial sectors, respectively. All three experience a change point at the start of 2020, and approximately evenly distributed breaks throughout the entire period. These three time series are also representative of the other members of the subcluster, that is, healthcare, industrials and real estate. Next, we display the log returns for communication services (\ref{fig:Communications_index}) and utilities (\ref{fig:Utilities_index}), which represent the remainder of the primary cluster in Figure \ref{fig:Sector_dendrogram}. Both of these are characterized by a period of stability in 2012-2019, much like that of the US time series (\ref{fig:US_index}). Finally, we display one of the outlier sectors, materials, in Figure \ref{fig:Materials_index}. This is characterized by relatively few change points early on. Broadly speaking, we can see that the sector time series have more similarity among themselves than the country indices. For example, every sector has a change point associated with the GFC and COVID-19, and sector change points are more evenly distributed throughout the entire period than for the countries. We shall observe this quantitatively as well.

\begin{figure}
    \centering
    \includegraphics[width=0.95\textwidth]{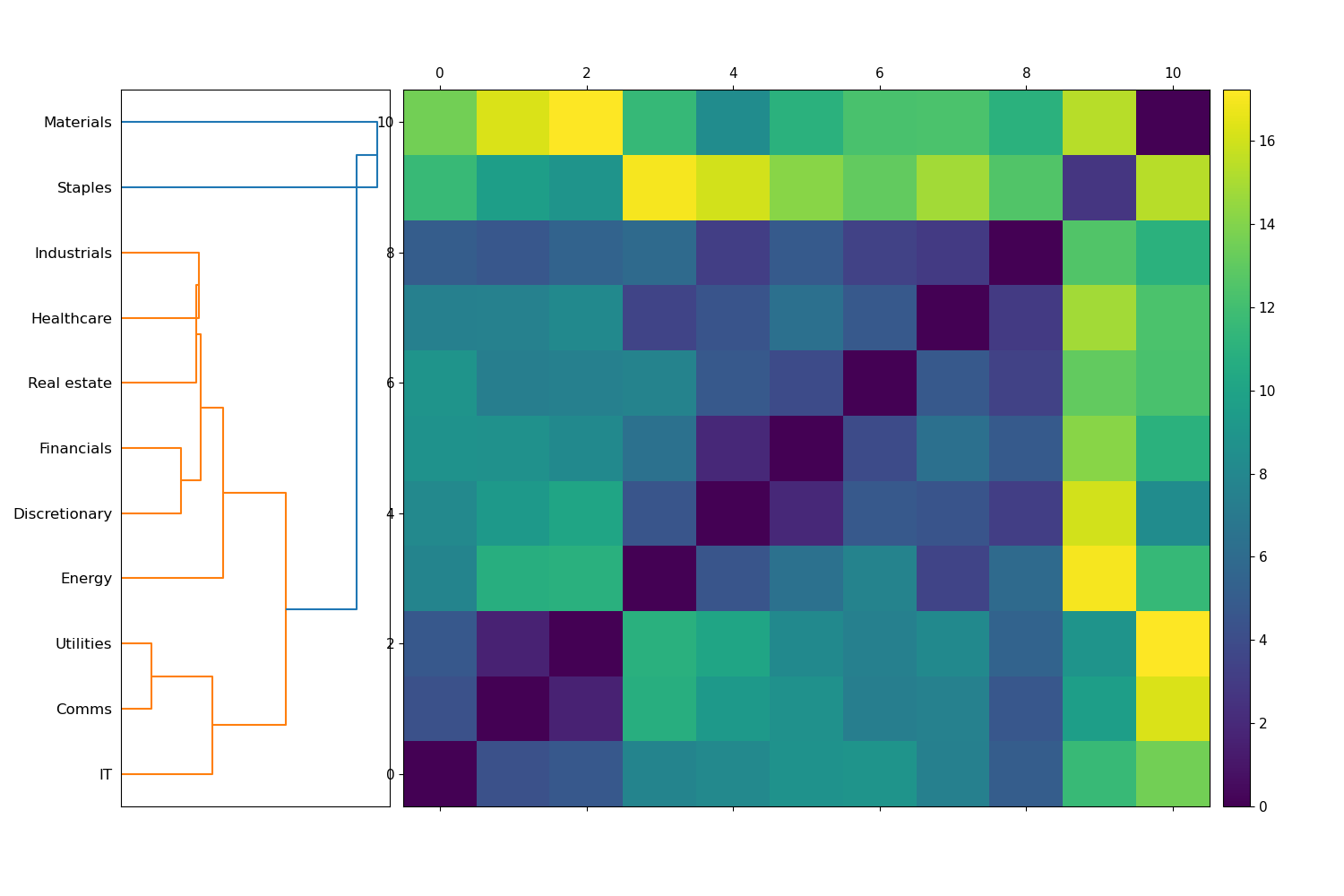}
    \caption{Hierarchical clustering on $D^s$, the distance matrix between sector indices' sets of change points with uncertainty. Materials and consumer staples (``Staples") are outliers. The most similarity is observed between consumer discretionary (``Discretionary"), energy, financials, healthcare, industrials and real estate. Communication services (``Comms"), IT and utilities show slightly different erratic behavior.}
    \label{fig:Sector_dendrogram}
\end{figure}

\begin{figure}
    \centering
    \begin{subfigure}[b]{0.48\textwidth}
        \includegraphics[width=\textwidth]{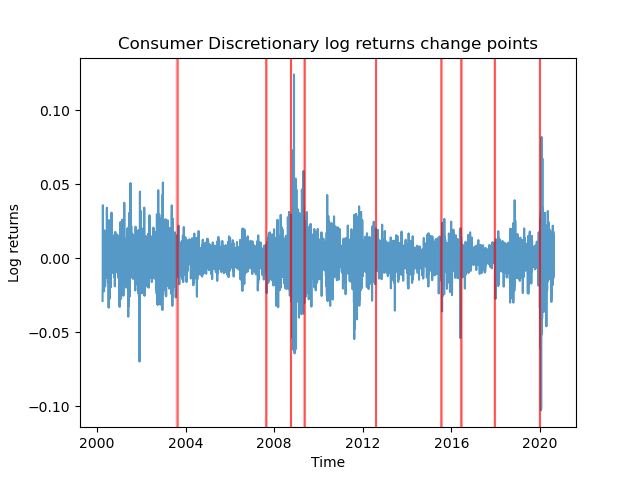}
        \caption{}
        \label{fig:CDisc_index}
    \end{subfigure}
    \begin{subfigure}[b]{0.48\textwidth}
        \includegraphics[width=\textwidth]{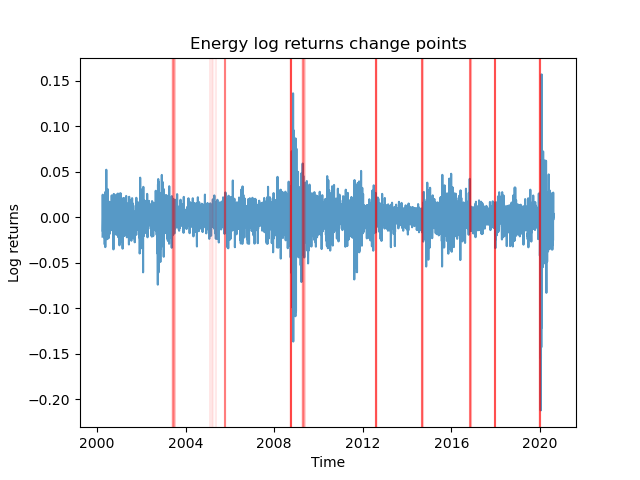}
        \caption{}
        \label{fig:Energy_index}
    \end{subfigure}
    \begin{subfigure}[b]{0.48\textwidth}
        \includegraphics[width=\textwidth]{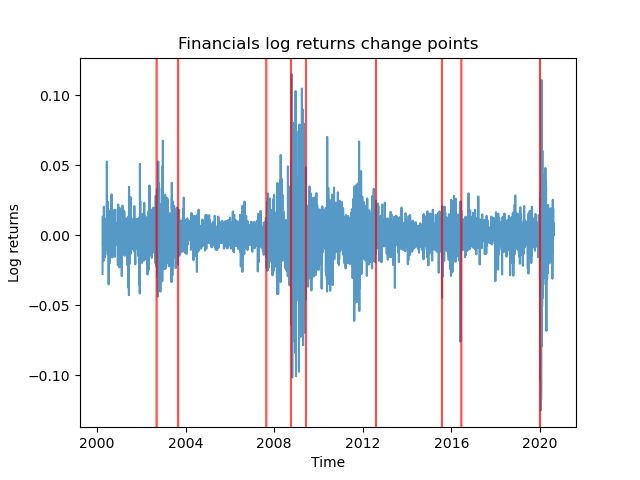}
        \caption{}
        \label{fig:Financials_index}
    \end{subfigure}
    \begin{subfigure}[b]{0.48\textwidth}
        \includegraphics[width=\textwidth]{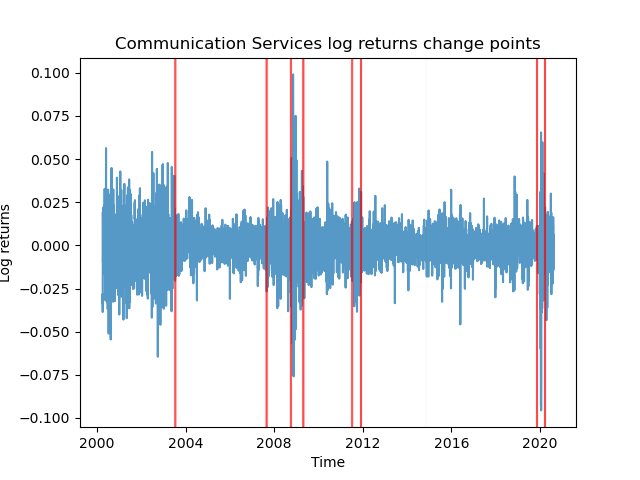}
        \caption{}
        \label{fig:Communications_index}
    \end{subfigure}
    \begin{subfigure}[b]{0.48\textwidth}
        \includegraphics[width=\textwidth]{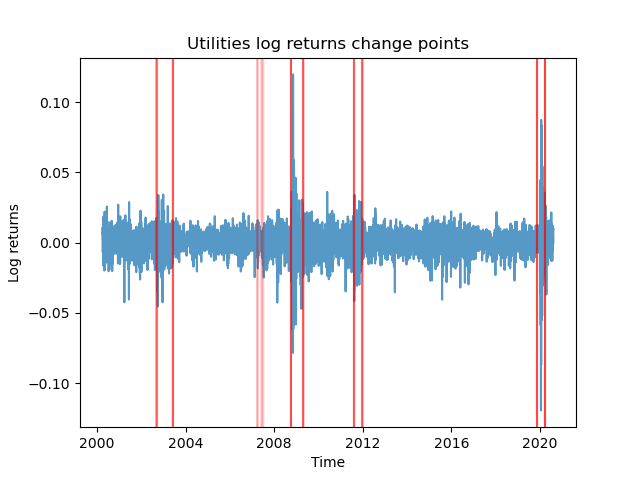}
        \caption{}
        \label{fig:Utilities_index}
    \end{subfigure}
    \begin{subfigure}[b]{0.48\textwidth}
        \includegraphics[width=\textwidth]{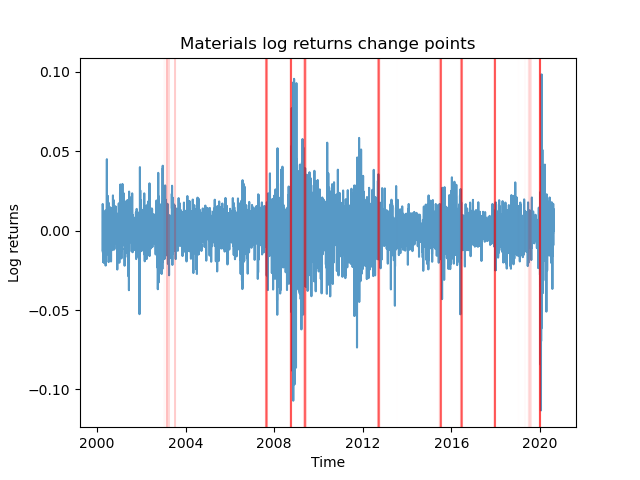}
        \caption{}
        \label{fig:Materials_index}
    \end{subfigure}
    \caption{Log return time series with detected change points for (a) consumer discretionary (b) energy, (c) financials, (d) communication services, (e) utilities and (f) materials. The transparency of the change point represents the value of the probability density function across its support interval. Every sector exhibits a change point in 2008-2009 around the GFC and in 2020 around COVID-19. Sector indices exhibit more similarity as a whole than country indices.}
    \label{fig:SectorTimeSeries}
\end{figure}

\begin{figure}
    \centering
    \begin{subfigure}[b]{0.48\textwidth}
        \includegraphics[width=\textwidth]{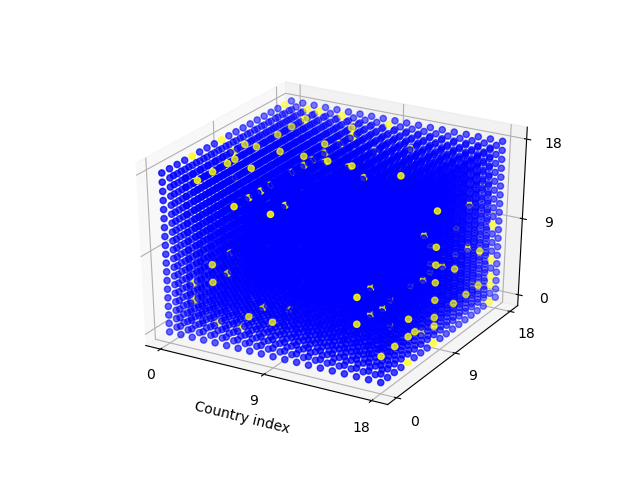}
        \caption{}
        \label{fig:Country_transitivity}
    \end{subfigure}
    \begin{subfigure}[b]{0.48\textwidth}
        \includegraphics[width=\textwidth]{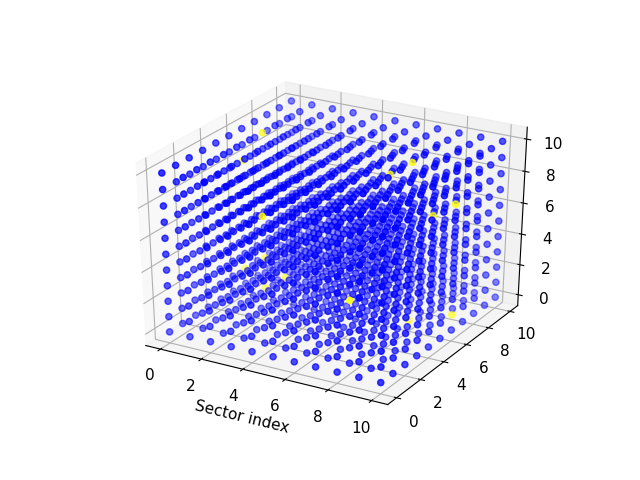}
        \caption{}
        \label{fig:Sector_transitivity}
    \end{subfigure}
    \caption{Triangle inequality test matrix, defined in (\ref{eq:transitivityanalysis}), for (a) countries and (b) sectors. A blue dot represents a triple that satisfies the triangle inequality, a yellow dot represents a failure of the triangle inequality by a factor of at most 2, and a red dot, of which there are none, indicates failure by a factor greater than 2. Although neither countries nor sectors fail the triangle inequality frequently or severely, countries do fail the triangle inequality slightly more frequently and severely than sectors. 2.45\% of candidate country triples fail, with an average fail of 1.10. 1.05\% of sector triples fail, with an average fail of 1.08.}
    \label{fig:Transitivity_tests}
\end{figure}

\subsection{Comparative analysis of the two collections}
In this section, we compare the properties of the two distance matrices between countries and sectors as a whole, analyzing different properties of their collective similarity. First, we analyze different matrix norms of $D^c$ and $D^s$. Let $D$ be an arbitrary symmetric $n \times n$ matrix  with diagonal entries equal to zero. Let
\begin{align}
    ||D||_1 &= \frac{1}{n(n-1)} \sum_{i,j=1}^n |D_{ij}|,\\
    ||D||_2 &= \left(\frac{1}{n(n-1)}\sum_{i,j=1}^n |D_{ij}|^2\right)^\frac12, \\
    ||D||_{op} &=\max_{v \in \mathbb{R}^n - \{0\}} \frac{||Dv||}{||v||} = \max \{|\lambda|: \lambda \text{ is an eigenvalue of } D \}.
\end{align}
These are referred to as the $L^1, L^2$ and \emph{operator} norms of the matrix $D$. We have rescaled the $L^1$ and $L^2$ norms by the number of non-zero elements of $D$, so we may compare matrices of different sizes. The operator norm does not require rescaling, as for example, the scalar matrix $kI_n$ has operator norm equal to $|k|$ regardless of $n$. The equality of the operator norm \cite{RudinFA} with the greatest eigenvalue holds as $D$ is symmetric, hence diagonalizable; this is known as the spectral theorem \cite{Axler}.

We record the different norms of $D^c$ and $D^s$ in Table \ref{tab:Distance_matrices_analysis}. As described in Section \ref{sec:MJWtheory}, the distance matrices are normalized by the slightly differing period lengths of the time series of countries and sectors (19 and 21 years, respectively). There, we confirm what we qualitatively observed in the above sections: the country distance matrix has consistently greater norms than the sectors, even after normalization by the number of elements and period length. This quantitatively shows greater discrepancy in distances between change points among countries than sectors. That is, the erratic behavior profiles are more similar among sectors than among countries. 

\subsection{Comparative analysis with existing distance measures}
\label{sec:comparativeanalysis}

In this section, we compare our results obtained from our methodology with several classical distances and similarity measures. We consider five measures between time series $x(t)$ and $y(t)$ over $t=1,...,T$:

\begin{enumerate}
    \item cosine similarity, defined as 
    \begin{align}
        \frac{\sum_{t=1}^T x(t)y(t)}{(\sum_{t=1}^T x(t)y(t))^\frac12 (\sum_{t=1}^T x(t)y(t))^\frac12 };
    \end{align}
    \item (Pearson) correlation, defined as the cosine similarity between $x(t)-\bar{x}$ and $y(t)-\bar{y}$;
    \item the Manhattan ($L^1$) distance \cite{Cantrell2000}, defined as    $\sum_{t=1}^T |x(t) - y(t)|$;
    \item the Euclidean ($L^2$) distance, defined as  $(\sum_{t=1}^T |x(t) - y(t)|^2)^\frac12$;
    \item the Chebyshev ($L^\infty$) distance \cite{Cantrell2000}, defined as $\max_t |x(t) - y(t)|$.
\end{enumerate}
In \ref{app:distances}, we display the ten heat maps obtained by using the five classical measures above to collections of both countries and sectors. Broadly, none of the observations concerning the erratic behavior of country and sector indices discussed previously in Section \ref{sec:results} can be obtained. We believe this demonstrates the necessity of using a new methodology to obtain inference on the similarity of sector or country indices with respect to their change points, taking uncertainty into account.

\subsection{Failure of the triangle inequality}

Finally, we analyze the failure of the triangle inequality when computing the semi-metric $d^1_{MJW}$ among the elements of each collection. As discussed in Section \ref{sec:MJWtheory}, the semi-metric does not satisfy the triangle inequality in general, so we propose an empirical method to test the triangle inequality among triples of elements in Section \ref{sec:transitivity analysis}. We form the matrix in (\ref{eq:transitivityanalysis}) for each collection, and display this in Figure \ref{fig:Transitivity_tests} for countries and sectors. The results highlight that neither sectors nor countries fail the triangle inequality frequently or severely, but triangle inequality failures are more frequent and slightly worse between country structural breaks. 1.05\% of sector triples fail the triangle inequality, with an average fail ratio of 1.08, while 2.45\% of country distances fail the triangle inequality with an average fail ratio of 1.10. We also record this in Table \ref{tab:Distance_matrices_analysis}.

\begin{table}[ht]
\begin{center}
\begin{tabular}{ |p{2.75cm}||p{1.75cm}|p{1.75cm}|}
 \hline
 \multicolumn{3}{|c|}{Comparative distance matrix analysis} \\
 \hline
  & Countries & Sectors \\
 \hline
  $L^1$ norm & 11.54 & 8.60 \\
 $L^2$ norm & 12.68 & 9.49 \\
  Operator norm & 221.1 & 90.8 \\
 \% of fails & 2.45\% & 1.05\% \\
 Average fail & 1.10 & 1.08 \\
\hline
\end{tabular}
\caption{Matrix norms and transitivity analysis results for $D^c$ and $D^s$, distance matrices between country and sector indices, respectively. We see quantitatively that country indices exhibit greater overall discrepancy in sets of change points than sectors.}
\label{tab:Distance_matrices_analysis}
\end{center}
\end{table}

\section{Conclusion}
\label{sec:conclusion}

This paper proposes a new method for measuring similarity in erratic behavior among a collection of time series. Our mathematical framework defines sets with uncertainty, introduces a new class of semi-metrics between them, and combines this with a suitable change point algorithm to quantify discrepancy between time series based on their change points, taking uncertainty into account. Our change point detection algorithm is judiciously chosen to perform well on dependent data, and to record the uncertainty around change points. Although our semi-metric may violate the triangle inequality, we include an empirical investigation and show that in all applications of the paper, the triangle inequality is violated infrequently and mildly. Thus, the semi-metric still respects a transitive property of closeness, as discussed in Section \ref{sec:MJWtheory}.

Applying this methodology to collections of country and sector indices over 20 years, we obtain immediate and visible insight into the structure of these collections with respect to their sets of change points, which are representatives of the indices' erratic behavior. We see clear similarity between the developed countries of our collection, particularly European countries. Examining these countries individually, we confirm similarity in several key change points, including the GFC, COVID-19, and Brexit in the case of European countries. Our methodology is also able to quickly identify outlier countries, such as China. Examining China reveals a clearly anomalous feature: no change point associated with COVID-19.

Applying our methodology to sectors and analyzing the resulting distance matrix compared to that of countries also produces interesting results. We show quantitatively that there is greater collective similarity among the erratic behavior profiles of sectors than countries. This can be qualitatively observed by examining the individual sector time series, where change points are consistently observed to be fairly equidistributed over time. In addition, every sector exhibits a change point around both the GFC and COVID-19. This observation has meaningful implications for investors, especially with regards to portfolio risk. Since there is greater similarity in sector erratic behaviors compared with countries, this suggests that diversification benefits are more substantial to investors diversifying with respect to countries (as opposed to sectors), as there is greater potential for reduction in change point variance.

The findings of the entire paper have several interpretations regarding modern trends in portfolio management. The metric introduced could accompany traditional risk management tools, and provide a framework for measuring distance between country and sector erratic behavior profiles. Our metric may capture dynamics and associations that are not explicitly identified in traditional measures such as correlation. The experiments in this paper may provide an indication for potential use cases for investors concerned with asset allocation across countries and sectors. Future work could use analysis of the erratic behavior profiles of different time series, such as individual equities, to provide an alternative or modification of traditional portfolio diversification.

In summary, this paper introduces a new measure to study similarity between time series' change points (as determined by a Bayesian change point algorithm), while capturing the uncertainty around such erratic shifts. We present promising findings on country and sector indices, with substantial insights into the structure of the two collections and the differences therein. Our methodology and findings pair well with existing statistical techniques, with the identification of frequently observed change points inspiring a closer examination of the associated crises. This closer analysis identifies relationships between erratic behavior profiles, correlations, dynamics and trajectories, with noteworthy implications for financial practitioners and new understandings of risk management during crises.

\appendix
\section{Bayesian change point detection algorithm}
\label{appendix:RJMCMCsampling}

In this section, we describe the Bayesian change point detection algorithm that we use in the paper. Specifically, we use a reversible jump Markov chain Monte Carlo (RJMCMC) algorithm \cite{Rosen2017} that continually updates a partitioning of the time series into $m$ segments, where $m$ also changes with time. The sampling scheme produces a distribution over $m$ and a set of $m$ change points. As described in Section \ref{sec:MJWtheory}, we conclude by selecting the maximal likely $m_0$; then our set of change points varies with uncertainty conditional on $m_0$.

Our methodology detects change points based on changes in the time-varying power spectrum \cite{Rosen2009}. This addresses many of the limitations of traditional change point detection algorithms to handle dependent data. Prior work \cite{Rosen2017} has demonstrated that the methodology applied in this paper appropriately detects changes in piecewise autoregressive processes, highlighting the algorithm's ability to partition dependent data. This is crucial when analyzing financial time series, where rich autocorrelation structure has been observed in the literature.

We follow Rosen et al. \cite{Rosen2012} in our implementation of the RJMCMC sampling scheme. We denote a varying partition of the time series by $\bs{\xi}_{m} = (\xi_{0,m},...,\xi_{m,m})$; these are our $m$ change points (excluding $\xi_{m,m}$, which by convention is always the final time point). The algorithm requires the consideration of a vector of \emph{amplitude parameters} $\bs{\tau}_{m}^{2} = (\tau_{1,m}^{2},...,\tau_{m,m}^{2})'$ and \emph{regression coefficients} $\bs{\beta}_{m} = (\bs{\beta'}_{1,m},...,\bs{\beta'}_{m,m})$ that we wish to estimate, for the $j$th component within a partition of $m$ segments, $j=1,...,m.$ For notational convenience, $\bs{\beta}_{j,m}, j=1,...,m,$ is assumed to include the first entry, $\alpha_{0j,m}.$ In the proceeding sections, superscripts $c$ and $p$ refer to current and proposed value in the RJMCMC sampling scheme, respectively.

First, we describe the \textbf{between-model moves}, where the number of change points $m$ may be changed. Let $\bs{\theta}_{m} = (\bs{\xi}'_{m}, \bs{\tau}^{2'}_{m}, \bs{\beta'}_{m})$ be the model parameters at some point in the RJMCMC sampling scheme and assume that the chain starts at $(m^c, \bs{\theta}_{m^c}^{c})$. The algorithm proposes a move to $(m^p, \bs{\theta}_{m^p}^p)$, by drawing $(m^p, \bs{\theta}_{m^p}^p)$ from a proposal distribution  $q(m^p, \bs{\theta}_{m^p}^p|m^c, \bs{\theta}_{m^c}^c)$. This draw is accepted with probability
\begin{equation*}
    \alpha = \text{ min } \Bigg\{1, \frac{p(m^{p}, \bs{\theta}_{m^p}^{p}|\bs{x}) q(m^c, \bs{\theta}_{m^c}^{c}|m^p, \bs{\theta}_{m^p}^{p})}
    {p(m^{c}, \bs{\theta}_{m^c}^{c}|\bs{x}) q(m^p, \bs{\theta}_{m^p}^{p}|m^c, \bs{\theta}_{m^c}^{c})} \Bigg\},   
\end{equation*}
where $p(\cdot)$ is a target distribution, namely the product of the likelihood and the prior. The target and proposal distributions vary based on the type of move taken in the sampling scheme. The distribution $q(m^p, \bs{\theta}_{m^{p}}^{p}| m^c, \bs{\theta}_{m^{c}}^{c})$ is described as follows:
\begin{align*}
    q(m^p, \bs{\theta}_{m^p}^{p}|m^c, \bs{\theta}_{m^{c}}^{c}) = q(m^p|m^c)  q(\bs{\theta}_{m^p}^{p}| m^p, m^c, \bs{\theta}_{m^c}^{c}) 
    = q(m^p|m^c)  q(\bs{\xi^p_{m^p}}, \bs{\tau}_{m^p}^{2p}, \bs{\beta}_{m^p}^{p} | m^p, m^c, \bs{\theta}_{m^c}^c) \\
    = q(m^p|m^c)  q(\bs{\xi}_{m^p}^p|m^p, m^c, \bs{\theta}_{m^c}^c)  q(\bs{\tau}_{m^p}^{2p}|\bs{\xi}_{m^p}^p, m^p, m^c, \bs{\theta}_{m^c}^c)  q(\bs{\beta}_{m^p}^{p}|\bs{\tau}_{m^p}^{2p}, \bs{\xi}_{m^p}^p, m^p, m^c, \bs{\theta}_{m^c}^c).
\end{align*}
To draw $(m^p, \bs{\theta}_{m^p}^p)$, one must first draw $m^p$, and then $\bs{\xi}_{m^p}^p$, $\tau_{m^p}^{2p}, \text{ and } \bs{\beta}_{m^p}^{p}$. The number of segments $m^p$ is drawn from the proposal distribution $q(m^p|m^c)$. Let $M$ be the maximum number of segments and $m^{c}_{2,\text{min}}$ be the number of current segments containing at least $2  t_{min}$ data points. The proposal is as follows:
\begin{align*}
    q(m^p = k | m^c)= \left\{
                \begin{array}{ll}
                  1/2 \text{ if } k = m^c - 1, m^c + 1 \text{ and } m^{c} \neq 1, M, m_{2,\text{min}}^c \neq 0\\
                  1 \text{ if } k = m^{c}-1 \text{ and } m^{c} = M \text{ or } m_{2,\text{min}}^{c} = 0 \\
                  1 \text{ if } k = m^{c} + 1 \text{ and } m^{c} = 1
                \end{array}
              \right.
\end{align*}
Conditional on the proposed number of change points $m^p$, a new partition $\bs{\xi}_{m^p}^{p}$, a new vector of covariance amplitude parameters $\bs{\tau}_{m^p}^{2p}$, and a new vector of regression coefficients, $\bs{\beta}_{m^p}^{p}$ are proposed. $\tau^{2}$ is described as a smoothing parameter \cite{Rosen2012} or amplitude parameter.

Now, we describe the process of the \textbf{birth} of new segments, where the number of proposed change points increases. Suppose that $m^p = m^c + 1$. As defined before, a time series partition,
    \begin{align*}
    \bs{\xi}^{p}_{m^p} = (\xi^c_{0,m^c},...,\xi^c_{k^{*}-1,m^c},\xi_{k^{*},m^{p}}^{p}, \xi_{k^{*},m^{c}}^{c},...,\xi_{m^c,m^c}^{c})    
    \end{align*}
    is drawn from the proposal distribution $q(\bs{\xi}_{m^p}^{p}|m^p, m^c, \bs{\theta}_{m^c}^c)$. The algorithm first proposes a partition by selecting a random segment $j = k^{*}$ to split. Then, a random point $t^{*}$ within the segment $j=k^{*}$ is selected to be the proposed partition point. This is subject to a constraint, $\xi_{k^{*}-1, m^c}^{c} + t_{\text{min}} \leq t* \leq \xi_{k^{*},m^c}^c - t_{\text{min}}$. The proposal distribution is then computed as follows:
    \begin{align*}
        q(\xi_{j,m^p}^{p} = t^{*} | m^p, m^c, \bs{\xi}_{m^c}^{c}) =  p(j=k^{*} | m^p, m^c, \bs{\xi}_{m^c}^{c}); \\
         p(\xi_{k^{*}, m^p}^{p} = t^{*} | j=k^{*}, m^p, m^c, \bs{\xi}_{m^c}^c)
        =  \frac{1}{m_{2 \text{min}}^{c}(n_{k^{*}, m^c} - 2t_{\text{min}}+1)}.
    \end{align*}
Then, the vector of amplitude parameters
    \begin{align*}
        \tau_{m^p}^{2p} = (\tau_{1,m^c}^{2c},...,\tau_{k^{*}-1,m^c}^{2c},  \tau_{k^{*},m^p}^{2p}, \tau_{k^{*}+1,m^p}^{2p}, \tau_{k^{*}+1,m^c}^{2c},...,\tau_{m^c, m^c}^{2c})
    \end{align*}
    is drawn from the proposal distribution $q(\bs{\tau}_{m^p}^{2p}|m^p, \bs{\xi}_{m^p}^p, m^c, \bs{\theta}_{m^c}^c) = q(\bs{\tau}_{m^p}^{2p}|m^p, \bs{\tau}_{m^c}^{2c}).$ The algorithm is based on the RJMCMC algorithm of \cite{GREEN1995}. This draws from a uniform distribution $u \sim U[0,1]$ and defines $\tau_{k^{*}, m^p}^{2p}$ and $\tau_{k^{*}+1, m^p}^{2p}$ in terms of $u$ and $\tau_{k^{*}, m^c}^{2c}$ as follows:
    \begin{align}
    \label{eq:birth1}
        \tau_{k^{*, m^p}}^{2p} =   \frac{u}{1-u}\tau_{k^{*}, m^c}^{2c}; \\
        \tau_{k^{*}+1, m^p}^{2p} =   \frac{1-u}{u}\tau_{k^{*}, m^c}^{2c}.
        \label{eq:birth2}
    \end{align}
Then, the vector of coefficients
    \begin{align*}
        \bs{\beta}_{m^p}^p = (\bs{\beta}_{1,m^c}^{c},...,\bs{\beta}_{k^{*}-1,m^c}^{c},  \bs{\beta}_{k^{*}, m^p}^p, \bs{\beta}_{k^{*}+1,m^p}^{p}, \bs{\beta}_{k^{*}+1,m^c}^{c},...,\bs{\beta}_{m^c,m^c}^{c})
    \end{align*}
    is drawn from the proposal distribution  $q(\bs{\beta}_{m^p}^p|\bs{\tau}_{m^p}^{2p},\bs{\xi}_{m^p}^{2p},m^p, m^c, \bs{\theta}_{m^c}^c) = q(\bs{\beta}_{m^p}^{p}|\bs{\tau}_{m^p}^{2p}, \bs{\xi}_{m^p}^p, m^p)$. The vectors $\bs{\beta}_{k^{*}, m^p}^p$ and $\bs{\beta}_{k^{*}+1, m^p}^p$ are drawn from Gaussian approximations to the respective posterior conditional distributions $p(\bs{\beta}_{k^{*}, m^p}^p|\bs{x}_{k^{*}}^p, \tau_{k^{*}, m^p}^{2p}, m^p)$ and  $p(\bs{\beta}_{k^{*}+1, m^p}^{p}|\bs{x}_{k^{*}+1}^p, \tau_{k^{*}+1, m^p}^{2p}, m^p)$, respectively. Here, $\bs{x}_{k^{*}}^p$ and $\bs{x}_{k^{*}+1}^p$ refer to the subsets of the time series across segments $k^{*}$ and $k^{*}+1$, respectively. Then, $\bs{\xi}_{m^p}^p$  determines $\bs{x_{*}}^p = (\bs{x}_{k^{*}}^{p'}, \bs{x}_{k^{*}+1}^{p'})'$. We provide an example for illustration: the coefficient $\bs{\beta}_{k^{*}, m^p}^{p}$ is drawn from the Gaussian distribution $N(\bs{\beta}_{k^{*}}^{\text{max}}, \Sigma_{k^{*}}^{\text{max}})$, where  
    \begin{align*}
     \bs{\beta}_{k^{*}}^{\text{max}}=    \argmax_{\bs{\beta}_{k^{*}, m^p}^{p}} p(\bs{\beta}^p_{k^{*}, m^p}|\bs{x}_{k^{*}}^p, \tau_{k^{*}, m^p}^{2p}, m^p)
    \end{align*} and 
    \begin{align*}
      \Sigma_{k^{*}}^{\text{max}} =-\Bigg \{ \pdv{\log p(\bs{\beta}_{k^{*}, m^p}^p | \bs{x}_{k^{*}}^p, \tau_{k^{*}, m^p}^{2p}, m^p)}{\bs{\beta}_{k^{*}, m^p}^{p}}{\bs{\beta}_{k^{*}, m^p}^{p'}} \Bigg|_{\bs{\beta}_{k^{*}, m^p}^{p} = \bs{\beta}_{k^{*}}^{\text{max}}}  \Bigg \}_.^{-1}
    \end{align*}
    For the birth move (that is the increase in $m$), the probability of acceptance is $\alpha = \min\{1,A\}$, where $A$ is equal to
    \begin{align*}
        \Bigg| \pdv{(\tau_{k^{*}, m^p}^{2p}, \tau_{k^{*}+1, m^p}^{2p})}{(\tau_{k^{*}, m^c, u}^{2c})}  \Bigg|\frac{p(\bs{\theta}_{m^p}^p|\bs{x}, m^p) p(\bs{\theta}_{m^p}^p|m^p)p(m^p)}{p(\bs{\theta}_{m^p}^p|\bs{x}, m^p) p(\bs{\theta}_{m^c}^c|m^c)p(m^c)}  \frac{p(m^{c}|m^p)p(\bs{\beta}_{k^{*}, m^c}^{c})}{p(m^p|m^c)p(\xi_{k^{*}, m^p}^{m^p}|m^p, m^c) p(u) p(\bs{\beta}^p_{k^{*}, m^p})p(\bs{\beta}_{k^{*}+1,m^{p}}^{p})}.  
    \end{align*}
    Above, $p(u) = 1, 0 \leq u \leq 1,$ while $p(\bs{\beta}_{k^{*}, m^p}^{p})$ and $p(\bs{\beta}_{k^{*}+1, m^p}^{p})$ are the density functions of Gaussian proposal distributions $N(\bs{\beta}_{k^{*}}^{\text{max}}, \Sigma_{k^{*}}^{\text{max}})$ and $N(\bs{\beta}_{k^{*}+1}^{\text{max}}, \Sigma_{k^{*}+1}^{\text{max}})$, respectively. The above Jacobian is computed as
    \begin{align*}
        \bigg| \pdv{(\tau_{k^{*}, m^p}^{2p}, \tau_{k^{*}+1, m^p}^{2p})}{(\tau^{2c}_{k^{*}, m^c}, u)} \bigg| = \frac{2 \tau_{k^{*}m^c}^{2c}}{u(1-u)} = 2(\tau_{k^{*}, m^p}^{p} + \tau_{k^{*}+1, m^p}^{p})^{2}.
    \end{align*}

Now, we move on to describe the process of the \textbf{death} of new segments, that is, where the number of proposed change points decreases, or $m^p = m^c - 1$. A time series partition
\begin{align*}
\bs{\xi}_{m^p}^{p} = (\xi_{0,m^c}^{c},...,\xi_{k^{*}-1,m^c}^{c}, \xi_{k^{*}+1,m^c}^{c},...,\xi_{m^c,m^c}^{c}),
\end{align*}
is proposed by randomly selecting a single change point from $m^c - 1$ candidates, and removing it. The change point selected for removal is denoted $j=k^{*}$. There are $m^c -1$ possible change points available for removal among the $m^c$ segments currently in existence. The proposal may choose each change point with equal probability, that is,
\begin{align*}
    q(\xi_{j, m^p}^p|m^p, m^c, \bs{\xi}_{m^c}^c) = \frac{1}{m^c - 1}.
\end{align*}
The updated vector of amplitude parameters 
\begin{align*}
\bs{\tau}_{m^p}^{2p} = (\tau_{1, m^c}^{2c},...,\tau_{k^{*}-1,m^c}^{2c},\tau_{k^{*},m^p}^{2c}, \tau_{k^{*}+2,m^c}^{2c},...,\tau_{m^c,m^c}^{2c})    
\end{align*}
 is then drawn from the proposal distribution  $q(\bs{\tau}_{m^p}^{2p}|m^p, \bs{\xi}_{m^p}^p, m^c, \bs{\theta}_{m^c}^{c}) = q(\bs{\tau}_{m^p}^{2p}| m^p, \bs{\tau}_{m^c}^{2c})$. Then, one amplitude parameter $\tau_{k^{*}, m^p}^{2p}$ is formed from two candidate amplitude parameters, $\tau_{k^{*},m^c}^{2c}$ and $\tau_{k^{*}+1,m^c}^{2c}$,. by reversing the equations (\ref{eq:birth1}) and (\ref{eq:birth2}). Specifically,
\begin{align*}
    \tau_{k^{*}, m^p}^{2p} = \sqrt{\tau_{k^{*}, m^{c}}^{2c} \tau_{k^{*}+1, m^c}^{2c}}.
\end{align*}
Finally, the updated vector of regression coefficients,
\begin{align*}
    \bs{\beta}_{m^p}^{p} = (\beta_{1,m^c}^{c},...,\beta_{k^{*}-1,m^c}^{c}, \beta_{k^{*}, m^p}^{p}, \beta_{k^{*}+2,m^c}^{c},...,\beta_{m^c,m^c}^{c})
\end{align*}
is drawn from the proposal distribution  $q(\bs{\beta}_{m^p}^p|\bs{\tau}_{m^p}^{2p}, \bs{\xi}_{m^p}^p, m^p, m^c, \theta_{m^c}^c) = q(\bs{\beta}_{m^p}^{p}|\bs{\tau}_{m^p}^{2p}, \bs{\xi}_{m^p}^p, m^p)$. The vector of regression coefficients is drawn from a Gaussian approximation to the posterior distribution $p(\beta_{k^{*},m^p}|\bs{x}, \tau_{k^{*}, m^p}^{2p}, \bs{\xi}^p_{m^p}, m^p)$ with the same procedure for the vector of coefficients in the birth step. The probability of acceptance is the inverse of the corresponding birth step. If the move is accepted then the following updates of the current values occur: $m^c=m^p$ and $\bs{\theta}_{m^c}^c = \bs{\theta}_{m^p}^{p}$.

Finally, we describe the \textbf{within-model moves:}
henceforth, the number of change points $m$ is fixed and notation describing the dependence on the number of change points is removed. There are two parts of a within-model move. First, the change points may be relocated (with the same number), and conditional on the relocation, the basis function coefficients are updated. The steps are jointly accepted or rejected with a Metropolis-Hastings step, and the amplitude parameters are updated within a separate Gibbs sampling step. 

We assume the chain is located at $\bs{\theta}^{c} = (\bs{\xi}^{c}, \bs{\beta}^{c})$. The proposed move $\bs{\theta}^p = (\bs{\xi}^p, \bs{\beta}^p)$ is as follows: first, a change point $\xi_{k^{*}}$ is selected for relocation from $m-1$ candidate change points. Next, a position within the interval $[\xi_{k^{*}-1}, \xi_{k^{*}+1}]$ is chosen, subject to the fact that the new location is at least $t_{\text{min}}$ data points away from $\xi_{k^{*}-1}$ and $\xi_{k^{*}+1}$, so that
\begin{align*}
    \Pr(\xi^p_{k^{*}}=t) = \Pr (j=k^{*})  \Pr (\xi_{k^{*}}^{p}=t|j=k^{*}), 
\end{align*}
where $\Pr(j=k^{*}) = (m-1)^{-1}$. A mixture distribution for $\Pr(\xi_{k^{*}}^p=t|j=k^{*})$ is constructed to explore the space efficiently, so
\begin{align*}
    \Pr(\xi_{k^{*}}^{p}=t|j=k^{*}) =  \pi q_1 (\xi_{k^{*}}^p = t| \xi_{k^{*}}^{c}) + (1-\pi) q_2 (\xi_{k^{*}}^p=t|\xi_{k^{*}}^c),
\end{align*}
where $q_1(\xi_{k^{*}}^p = t| \xi_{k^{*}}^c) = (n_{k^{*}} + n_{k^{*}+1}-2t_{\text{min}} + 1)^{-1}, \xi_{k^{*}-1} + t_{\text{min}} \leq t \leq \xi_{k^{*}+1} - t_{\text{min}}$ and 

\begin{align*}
    q_2(\xi_{k^{*}}^p = t|\xi_{k^{*}}^{c})= 
\left\{
            \begin{array}{ll}
              0 \text{ if } |t-\xi^c_{k^{*}}| > 1;  \\
            1/3 \text{ if } |t-\xi^c_{k^{*}}| \leq 1, n_{k^{*}} \neq t_{\text{min}} \text{ and } n_{k^{*}+1} \neq t_{\text{min}};  \\ 
            1/2 \text{ if } t-\xi_{k^{*}}^{c} \leq 1, n_{k^{*}} = t_{\text{min}} \text{ and } n_{k^{*}+1} \neq t_{\text{min}}; \\
            1/2 \text{ if } \xi_{k^{*}}^{c} - t \leq 1, n_{k^{*}} \neq t_{\text{min}} \text{ and } n_{k^{*}+1} = t_{\text{min}}; \\
            1 \text{ if } t = \xi_{k^{*}}^{c}, n_{k^{*}} = t_{\text{min}} \text{ and } n_{k^{*}+1} = t_{\text{min}}.
            \end{array}
          \right.
\end{align*}
The support of $q_1$ has $n_{k^{*}} + n_{k^{*}+1} - 2t_{\text{min}} + 1$ points while $q_2$ has at most three. The term $q_2$ alone would result in a high acceptance rate for the Metropolis-Hastings step, but it would explore the parameter space slowly. The $q_1$ component allows for larger steps, and produces a compromise between a high acceptance rate and a thorough exploration of the parameter space. 

 Then, $\bs{\beta^{p}_{j}}, j=k^{*}, k^{*}+1$ is drawn from an approximation to $\prod^{k^{*}+1}_{j=k^{*}} p(\bs{\beta}_j|\bs{x}_j^p, \tau_j^{2})$, following the corresponding step in the between-model move. The proposal distribution, evaluated at $\bs{\beta}^{p}_j, j=k^{*}, k^{*}=1$, is
\begin{align*}
    q(\bs{\beta}_{*}^{p}|\bs{x}_{*}^p, \bs{\tau}_{*}^{2}) = \prod^{k^{*}+1}_{j=k^{*}} q(\bs{\beta}_{j}^p|\bs{x}_j^p, \tau_j^{2}),
\end{align*}
where $\bs{\beta}_{*}^p = (\bs{\beta}^{p'}_{k^{*}}, \bs{\beta}^{p'}_{k^{*}+1})'$ and $\bs{\tau}_{*}^{2} = (\tau^{2}_{k^{*}}, \tau^{2}_{k^{*}+1})'$. This proposal distribution is also evaluated at current values of $\bs{\beta}_{*}^{c} = (\beta^{c'}_{k^{*}}, \beta^{c'}_{k^{*}+1})'$. $\beta_{*}^p$ is then accepted with probability
\begin{equation*}
    \alpha = \min \Bigg\{ 1, \frac{p(\bs{x}_{*}^p|\bs{\beta}_{*}^{p}) p(\bs{\beta}_{*}^{p}|\bs{\tau}_{*}^{2}) q(\bs{\beta}_{*}^{c}|\bs{x}^{c}_{*}, \bs{\tau}_{*}^{2})} {p(\bs{x}_{*}^c|\bs{\beta}_{*}^{c}) p(\bs{\beta}_{*}^{c}|\bs{\tau}_{*}^{2}) q(\bs{\beta}_{*}^{p}|\bs{x}^{p}_{*}, \bs{\tau}_{*}^{2})} \Bigg\},
\end{equation*}
where $\bs{x}_{*}^{c} = (\bs{x}^{c'}_{k^{*}},\bs{x}^{c'}_{k^{*}+1})$. When the draw is accepted, we update the partition and regression coefficients $(\xi^{c}_{k^{*}}, \beta_{*}^{c}) = (\xi^{p}_{k^{*}}, \beta_{*}^{p})$.
Finally, we draw $\tau^{2p}$ from 
\begin{align*}
    p(\tau_{*}^{2}|\bs{\beta}_{*}) = \prod^{k^{*}+1}_{j=k^{*}} p(\tau_j^{2}|\beta_j).
\end{align*}
This is a Gibbs sampling step, and as such the draw is accepted with probability 1. 

This concludes the description of the algorithm. For our purposes, we do not need the vectors of amplitude parameters or regression coefficients, just the distribution of the change points.

\section{Details of synthetic autoregressive experiment}
\label{app:autoregressive}

First, we explicitly detail the processes governing the six synthetic time series discussed in Section \ref{sec:validation}. In what follows, $\epsilon_t$ are independent and identically distributed $N(0,0.1)$ Gaussian noise terms.

\begin{align}
x^{(1)}_{t}=\begin{cases}
			0.9 x^{(1)}_{t-1} + \epsilon_{t}, & \text{1 $\leq t \leq$ 200};\\
            1.5 x^{(1)}_{t-1} - 0.75 x^{(1)}_{t-2} + \epsilon_{t}, & \text{201  $\leq t \leq$ 500}; \\
            0.9 x^{(1)}_{t-1} + \epsilon_{t}, & \text{501  $\leq t \leq$ 700}; \\
            0.9 x^{(1)}_{t-1} - 0.8 x^{(1)}_{t-2 }+ \epsilon_{t}, & \text{701  $\leq t \leq$ 900}; \\
            - 0.9 x^{(1)}_{t-1} + \epsilon_{t}, & \text{901  $\leq t \leq$ 1100}; \\
            0.9 x^{(1)}_{t-1} + \epsilon_{t}, & \text{1101  $\leq t \leq$ 1300}; \\
            0.9 x^{(1)}_{t-1} - 0.8 x^{(1)}_{t-2} + \epsilon_{t}, & \text{1301  $\leq t \leq$ 1500}; \\
		 \end{cases}
\end{align}

\begin{align}
x^{(2)}_{t}=\begin{cases}
			1.5 x^{(2)}_{t-1} - 0.75 x^{(2)}_{t-2} + \epsilon_{t}, & \text{1  $\leq t \leq$ 195};\\
			0.9 x^{(2)}_{t-1} - 0.8 x^{(2)}_{t-2} + \epsilon_{t}, & \text{196  $\leq t \leq$ 500};\\
			1.5 x^{(2)}_{t-1} - 0.75 x^{(2)}_{t-2} + \epsilon_{t}, & \text{501  $\leq t \leq$ 690};\\
			0.9 x^{(2)}_{t-1} - 0.8 x^{(2)}_{t-2} + \epsilon_{t}, & \text{691  $\leq t \leq$ 900};\\
			-0.9 x^{(2)}_{t-1} + \epsilon_{t}, & \text{901  $\leq t \leq$ 1110};\\
			1.5 x^{(2)}_{t-1} - 0.75 x^{(2)}_{t-2} + \epsilon_{t}, & \text{1111  $\leq t \leq$ 1300};\\
			0.9 x^{(2)}_{t-1} - 0.8 x^{(2)}_{t-2} + \epsilon_{t}, & \text{1301  $\leq t \leq$ 1500};\\
		 \end{cases}
\end{align}

\begin{align}
x^{(3)}_{t}=\begin{cases}
			0.9 x^{(3)}_{t-1} - 0.8 x^{(3)}_{t-2} + \epsilon_{t}, & \text{1  $\leq t \leq$ 190}; \\
			0.9 x^{(3)}_{t-1} + \epsilon_{t}, & \text{191  $\leq t \leq$ 500}; \\
			-0.9 x^{(3)}_{t-1} + \epsilon_{t}, & \text{501  $\leq t \leq$ 685}; \\
			0.9 x^{(3)}_{t-1} - 0.8 x^{(3)}_{t-2} + \epsilon_{t}, & \text{686  $\leq t \leq$ 900}; \\
			-0.9 x^{(3)}_{t-1} + \epsilon_{t}, & \text{901  $\leq t \leq$ 1105}; \\
			0.9 x^{(3)}_{t-1} - 0.8 x^{(3)}_{t-2} + \epsilon_{t}, & \text{1106  $\leq t \leq$ 1300}; \\
			1.5 x^{(3)}_{t-1} - 0.75 x^{(3)}_{t-2} + \epsilon_{t}, & \text{1301  $\leq t \leq$ 1500}; \\
		 \end{cases}
\end{align}

\begin{align}
x^{(4)}_{t}=\begin{cases}
			-0.9 x^{(4)}_{t-1} + \epsilon_{t}, & \text{1  $\leq t \leq$ 190};\\
			0.9 x^{(4)}_{t-1} + \epsilon_{t}, & \text{191  $\leq t \leq$ 500};\\
			0.9 x^{(4)}_{t-1} - 0.8 x^{(4)}_{t-2} + \epsilon_{t}, & \text{501  $\leq t \leq$ 685};\\
			1.5 x^{(4)}_{t-1} - 0.75 x^{(4)}_{t-2} + \epsilon_{t}, & \text{686  $\leq t \leq$ 900};\\
			-0.9 x^{(4)}_{t-1} + \epsilon_{t}, & \text{901  $\leq t \leq$ 1105};\\
			0.9 x^{(4)}_{t-1} - 0.8 x^{(4)}_{t-2} + \epsilon_{t}, & \text{1106  $\leq t \leq$ 1300};\\
			-0.9 x^{(4)}_{t-1} + \epsilon_{t}, & \text{1301  $\leq t \leq$ 1500};\\
		 \end{cases}
\end{align}

\begin{align}
x^{(5)}_{t}=\begin{cases}
			-0.9 x^{(5)}_{t-1} + \epsilon_{t}, & \text{1  $\leq t \leq$ 750}; \\
			1.5 x^{(5)}_{t-1} - 0.75 x^{(5)}_{t-2} + \epsilon_{t}, & \text{751  $\leq t \leq$ 1500}; \\
		 \end{cases}
\end{align}

\begin{align}
x^{(6)}_{t}=\begin{cases}
			0.9 x^{(6)}_{t-1} - 0.8 x^{(6)}_{t-2} + \epsilon_{t}, & \text{1  $\leq t \leq$ 750}; \\
			-0.9 x^{(6)}_{t-1} + \epsilon_{t}, & \text{751  $\leq t \leq$ 1500}; \\
		 \end{cases}
\end{align}

Next, we plot the above six time series in Figure \ref{fig:ARTS}, as well as the determined sets of change points with uncertainty. While not perfect, we believe this validates our methodology as determining difficult-to-detect change points in time series with a high level of autocorrelation.

Finally, we describe our sensitivity analysis regarding small changes in autoregressive parameters. We form four time series of length $T=2000$, each with a subtle break at $t=1000$. Each of the time series is of the same form:
\begin{align}
y^{(i)}_{t}=\begin{cases}
			0.9 y^{(i)}_{t-1}  + \epsilon_{t}, & \text{1  $\leq t \leq$ 1000}; \\
			c_i y^{(i)}_{t-1} + \epsilon_{t}, & \text{1001  $\leq t \leq$ 2000}, \\
		 \end{cases}
\end{align}
where $c_i$ is a specified constant for each time series $i=1,...,4$. We select constants $c_1=0.5, c_2=0.6, c_3=0.7, c_4=0.8$. Applying our change point methodology, change points near $t=1000$ are successfully detected for $c_1=0.5$ and $c_2=0.6$ but not for the even smaller changes $c_3=0.7$ and $c_4=0.8$ relative to 0.9. For $c_1$ and $c_2$, change points are estimated at $t=1009$ and $t=977$, respectively. Thus, we may observe successful detection of change points in all but the slightest changes of autoregressive parameters.

\begin{figure}
    \centering
    \begin{subfigure}[b]{0.48\textwidth}
        \includegraphics[width=\textwidth]{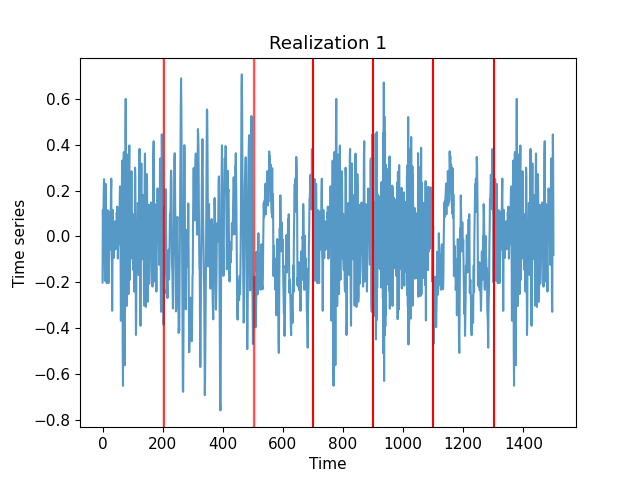}
        \caption{}
        \label{fig:AR_1}
    \end{subfigure}
    \begin{subfigure}[b]{0.48\textwidth}
        \includegraphics[width=\textwidth]{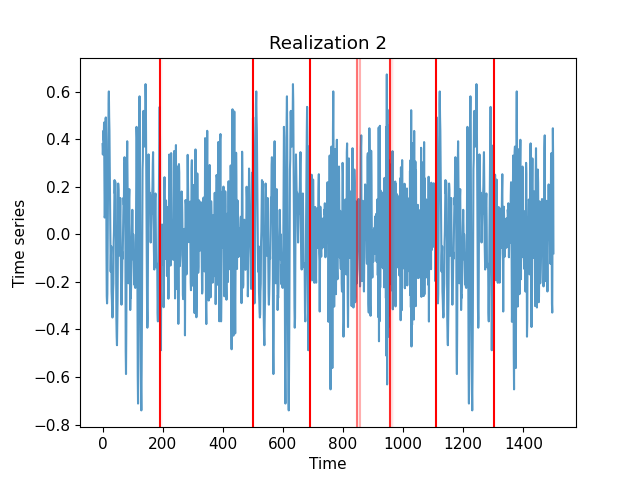}
        \caption{}
        \label{fig:AR_2}
    \end{subfigure}
    \begin{subfigure}[b]{0.48\textwidth}
        \includegraphics[width=\textwidth]{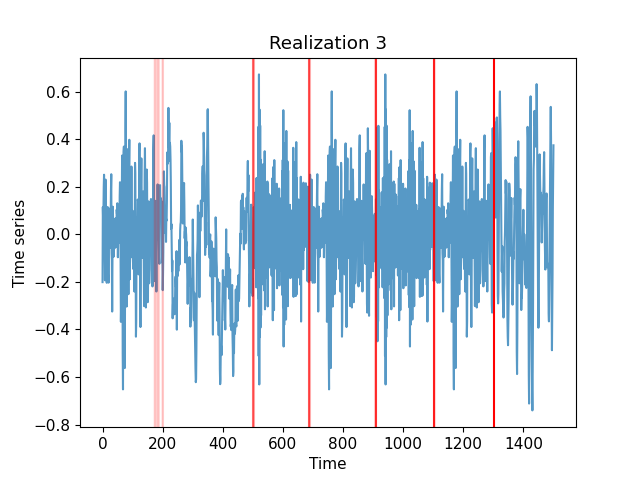}
        \caption{}
        \label{fig:AR_3}
    \end{subfigure}
    \begin{subfigure}[b]{0.48\textwidth}
        \includegraphics[width=\textwidth]{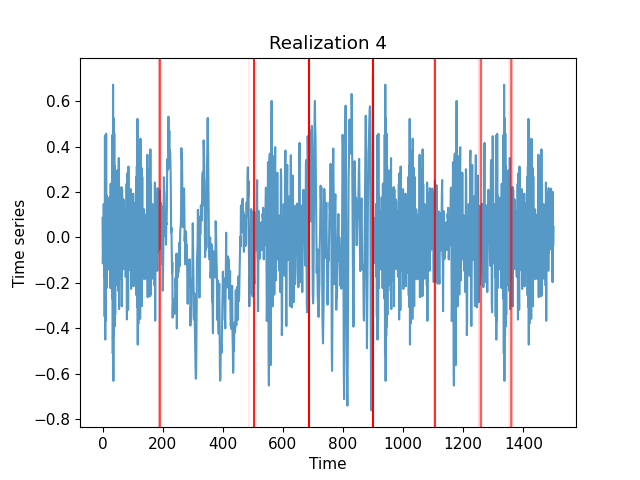}
        \caption{}
        \label{fig:AR_4}
    \end{subfigure}
    \begin{subfigure}[b]{0.48\textwidth}
        \includegraphics[width=\textwidth]{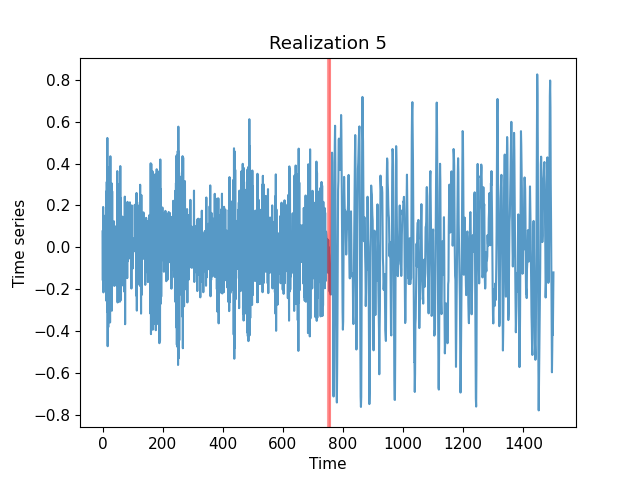}
        \caption{}
        \label{fig:AR_5}
    \end{subfigure}
    \begin{subfigure}[b]{0.48\textwidth}
        \includegraphics[width=\textwidth]{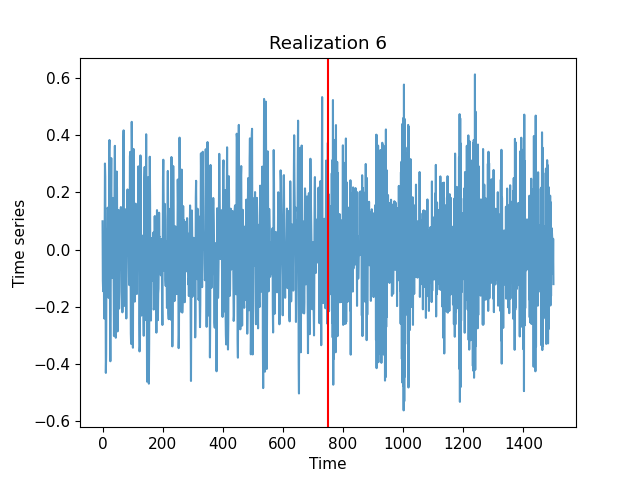}
        \caption{}
        \label{fig:AR_6}
    \end{subfigure}
    \caption{Six synthetic time series built from piecewise autoregressive processes, together with detected change points. The transparency of the change point represents the value of the probability density function across its support interval.}
    \label{fig:ARTS}
\end{figure}

\section{Comparison to existing distance measures}
\label{app:distances}

In this brief section, we plot ten heat maps obtained by applying the five classical measures defined in Section \ref{sec:comparativeanalysis} to the collections of country and sector indices. For a consistent comparison, we make a small adjustment to the latter three metrics (the Manhattan, Euclidean and Chebyshev distances).

The cosine similarity and correlation take values between $-1$ and $1$, with 1 indicating the greatest possible similarity. The latter three metrics indicate greatest similarity with a value of zero. For ease of comparison, we linearly adjust the latter three metrics as follows: given a $n \times n$ distance matrix $D$, let its associated affinity matrix be defined as
\begin{align}
    A_{ij}=1 - \frac{D_{ij}}{\max D}, i,j=1,...,n.
\end{align}
Then an affinity value of 1 between two time series indicates greatest possible similarity.

Having performed this transformation, we plot heat maps of the cosine similarity and correlation matrices as well as the affinity matrices associated to the Manhattan, Euclidean and Chebyshev distances in Figure \ref{fig:altdistances}. Essentially none of the findings reported in Section \ref{sec:results} can be drawn from these figures. We believe this demonstrates the utility of the methodology of the paper.

\begin{figure}
    \centering
    \begin{subfigure}[b]{0.32\textwidth}
        \includegraphics[width=\textwidth]{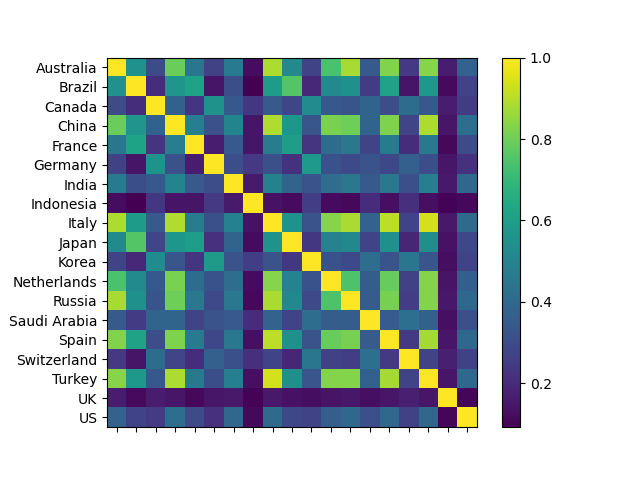}
        \caption{}
        \label{fig:corr_countries}
    \end{subfigure}
    \begin{subfigure}[b]{0.32\textwidth}
        \includegraphics[width=\textwidth]{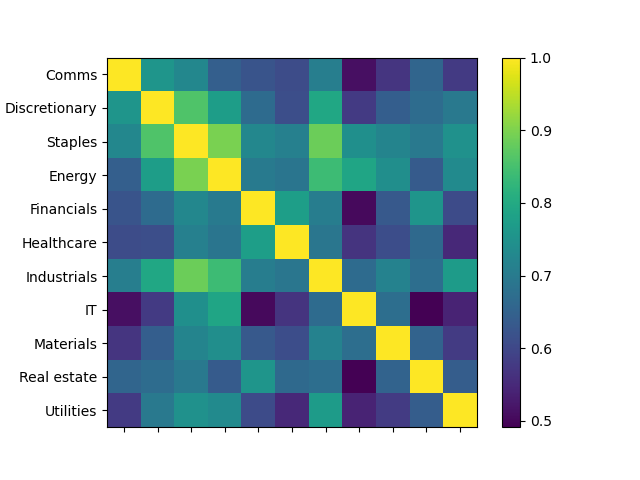}
        \caption{}
        \label{fig:corr_sectors}
    \end{subfigure}
    \begin{subfigure}[b]{0.32\textwidth}
        \includegraphics[width=\textwidth]{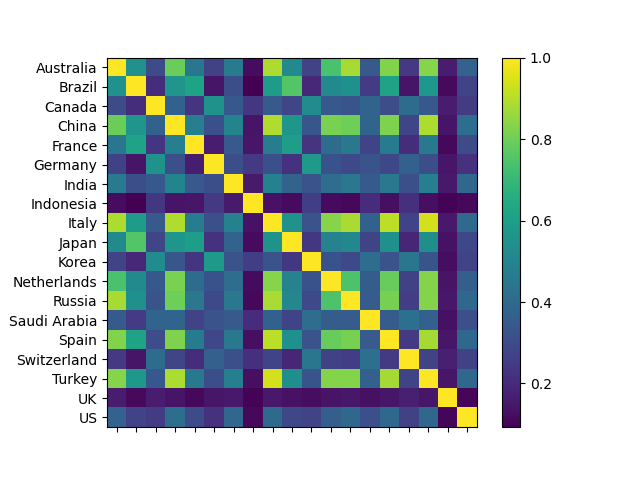}
        \caption{}
        \label{fig:cos_countries}
    \end{subfigure}
    \begin{subfigure}[b]{0.32\textwidth}
        \includegraphics[width=\textwidth]{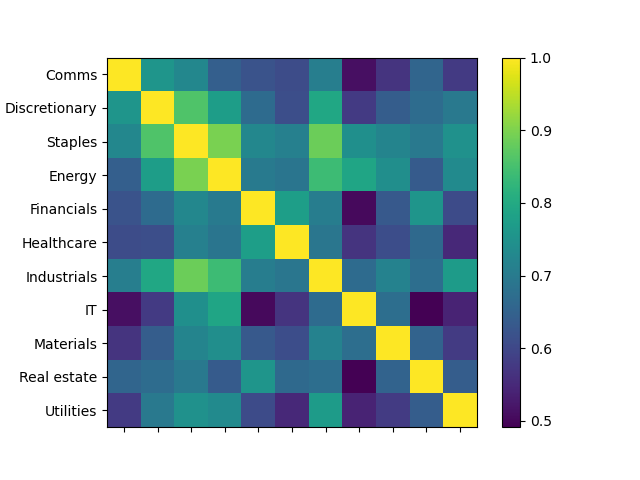}
        \caption{}
        \label{fig:cos_sectors}
    \end{subfigure}
    \begin{subfigure}[b]{0.32\textwidth}
        \includegraphics[width=\textwidth]{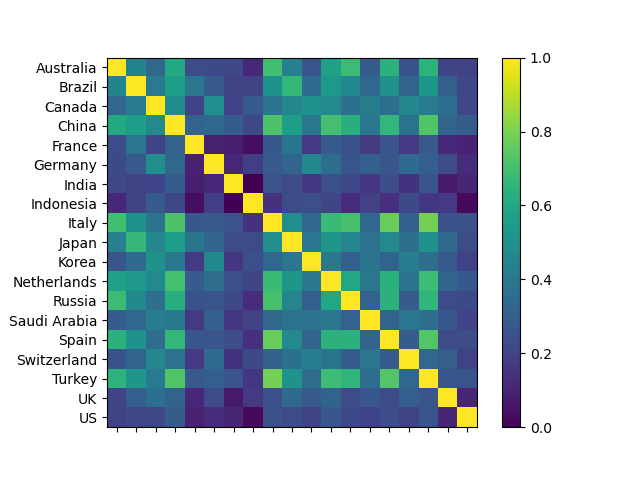}
        \caption{}
        \label{fig:man_countries}
    \end{subfigure}
    \begin{subfigure}[b]{0.32\textwidth}
        \includegraphics[width=\textwidth]{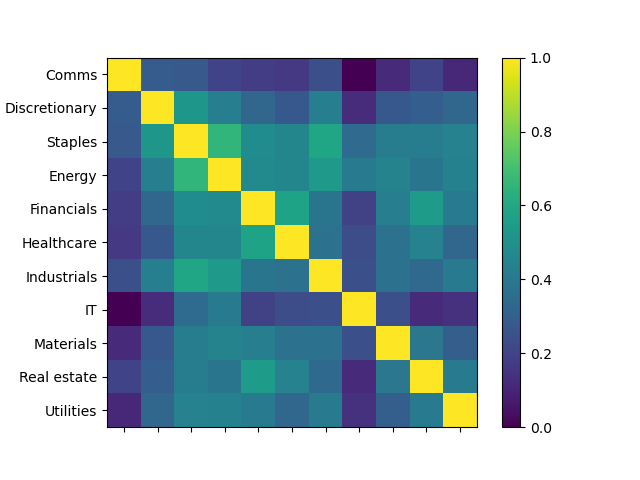}
        \caption{}
        \label{fig:man_sectors}
    \end{subfigure}
    \begin{subfigure}[b]{0.32\textwidth}
        \includegraphics[width=\textwidth]{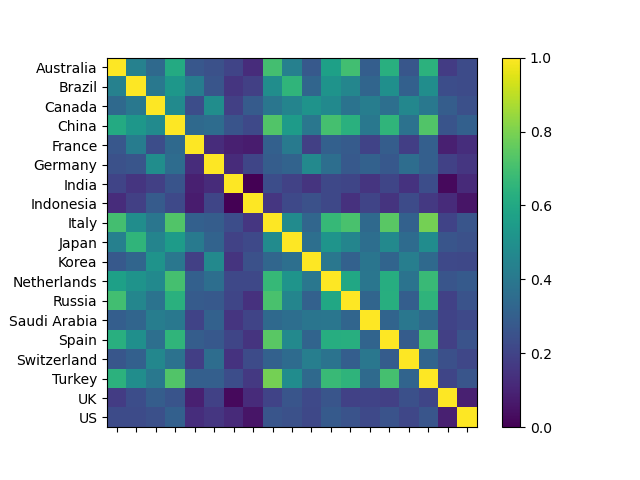}
        \caption{}
        \label{fig:euc_countries}
    \end{subfigure}
    \begin{subfigure}[b]{0.32\textwidth}
        \includegraphics[width=\textwidth]{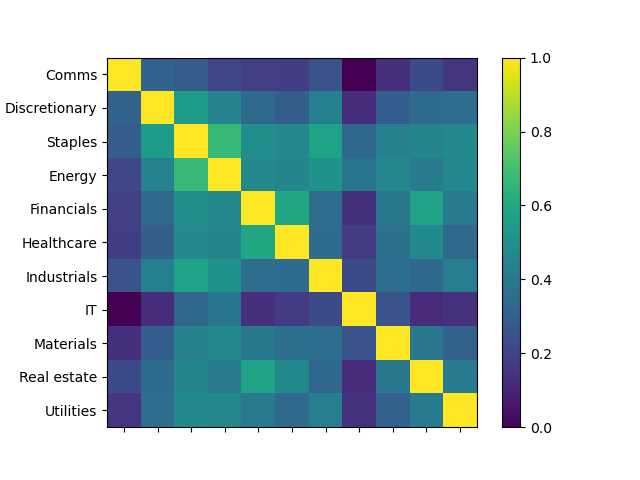}
        \caption{}
        \label{fig:euc_sectors}
    \end{subfigure}
    \begin{subfigure}[b]{0.32\textwidth}
        \includegraphics[width=\textwidth]{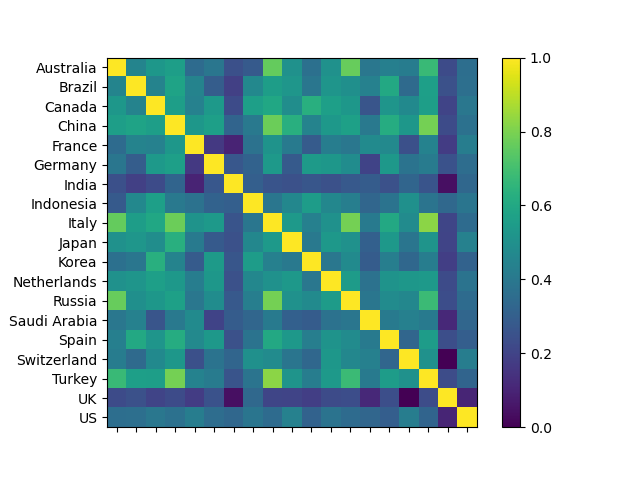}
        \caption{}
        \label{fig:Cheb_countries}
    \end{subfigure}
    \begin{subfigure}[b]{0.32\textwidth}
        \includegraphics[width=\textwidth]{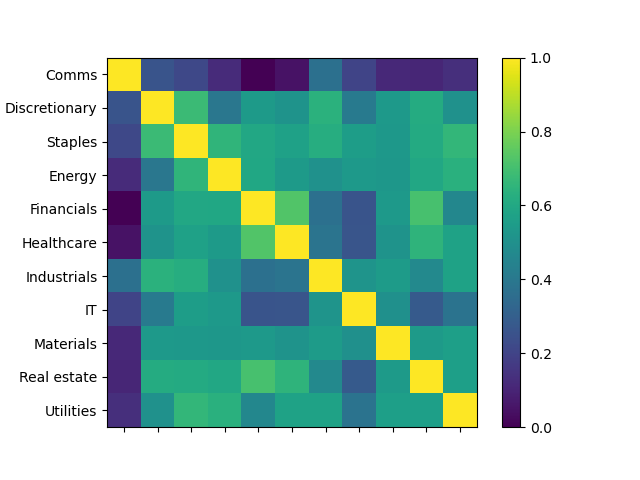}
        \caption{}
        \label{fig:Cheb_sectors}
    \end{subfigure}
    \caption{Heat maps of classical similarity measures between country and sector indices. We display the cosine similarity for (a) countries and (b) sectors as well as correlation for (c) countries and (d) sectors. Subsequently, we plot the affinity matrices associated to three metrics, the Manhattan metric in (e) and (f), the Euclidean metric in (g) and (h), and the Chebyshev metric in (i) and (j), each for countries and sectors, respectively.}
    \label{fig:altdistances}
\end{figure}

\bibliographystyle{iopart-num}
\bibliography{_references.bib}

\end{document}